\newtheorem{theorem}{Theorem}
\newtheorem{lemma}[theorem]{Lemma}
\title{Approximating Multi-Dimensional and Multiband Signals}
\author{Yuhan Li, Tianyao Huang$^*$, Yimin Liu and Xiqin Wang\vspace{-2em}
	\thanks{This work was supported by the National Natural Science Foundation of China under Grants 62171259. 
	T. Huang is with the  School of Computer and Communication Engineering, University of Science and Technology Beijing, Beijing,  China. Y. Liu, Xiqin Wang, and Y. Li are with the EE Department, Tsinghua University, Beijing, China.
		$^*$Correspondence: huangtianyao@ustb.edu.cn.
}}
\acrodef{dpss}[DPSSs]{discrete prolate spheroidal sequences}
\acrodef{pswf}[PSWFs]{prolate spheroidal wave functions}
\acrodef{mse}[MSE]{mean squared error}
\acrodef{1d}[1-D]{1-dimensional}
\acrodef{2d}[2-D]{2-dimensional}
\acrodef{nd}[n-D]{$n$-dimensional}
\begin{document}
\maketitle

\begin{abstract}
We study the problem of representing a discrete tensor that comes from finite uniform samplings of a multi-dimensional and multiband analog signal. Particularly, we consider two typical cases in which the shape of the subbands is cubic or parallelepipedic. For the cubic case, by examining the spectrum of its corresponding time- and band-limited operators, we obtain a low-dimensional optimal dictionary to represent the original tensor. We further prove that the optimal dictionary can be approximated by the famous \ac{dpss} with certain modulation, leading to an efficient constructing method. For the parallelepipedic case, we show that there also exists a low-dimensional dictionary to represent the original tensor. We present rigorous proof that the numbers of atoms in both dictionaries are approximately equal to the dot of the total number of samplings and the total volume of the subbands. Our derivations are mainly focused on the \ac{2d} scenarios but can be naturally extended to high dimensions.

\end{abstract}

\begin{IEEEkeywords}
Multi-dimensional signals, multiband signals, discrete prolate spheroidal sequences, eigenvalues
	\end{IEEEkeywords}

 \section{Introduction}

A fundamental problem in signal processing is to represent or approximate the signal with a few atoms from a certain dictionary \cite{zhu2017approximating}. Such a representation or approximation benefits many tasks including reconstruction \cite{eldar2009robust}, denoising \cite{donoho1995noising}, and classification \cite{4483511}. The Fourier basis is a prevalent choice for signal representation, as it often captures signals efficiently due to the majority of their Fourier coefficients being negligible or null \cite{indyk2014sample}. However, the Fourier basis is not always the optimal choice for some natural signals, especially the multiband signal, encompassing multiple bands with non-negligible bandwidth.
 
Multiband signals have been widely concerned in communication \cite{mishali2010from,wakin2012nonuniform} and radar \cite{cohen2016carrier,zhu2015wall} applications.
The multiband signals possess a special structure in the frequency domain, whose spectral support set is constrained to certain continuous intervals \cite{mishali2009blind,mishali2010from}, that is, there is no energy or information outside these frequency intervals. These intervals are called bands. Besides, when there is only one interval, the multiband signals degrade to a band-limited signal.

For band-limited signals, some works have managed to achieve optimal approximation using \ac{pswf} \cite{slepian1961prolate,landau1961prolate,landau1962prolate} for in the continuous scenarios and \ac{dpss} \cite{slepian1978prolate,zemen2005time,davenport2012compressive} as well as periodic discrete prolate spheroidal sequences (PDPSSs) \cite{zhu2019eigenvalue,jain1981extra} in the discrete scenarios. These functions stem from the eigenvectors of the corresponding time- and band-limiting operators or covariance matrices \cite{zhu2019eigenvalue}. The distribution of their eigenvalues performs a sharp phase transition from 1 to 0, that is, most eigenvalues cluster to 1 or 0. The number of the non-zero eigenvalues is roughly equal to the product of the number of samplings and the proportion of the supporting set. Such a property leads to a natural low-dimension dictionary or basis to approximate the original band-limited signal and thus inspires wide applications \cite{zhu2017approximating}. The analysis of DPSSs is extended to multiband signals in \cite{zhu2017approximating}, which shows that the multiband signals hold similar distributions of eigenvalues as the band-limited signals. The low-dimension dictionary for the multiband signals can be approximated by the dictionaries from different band-limited signals with certain frequency modulations.

Yet, these works focus on the \ac{1d} signal and lack discussions on multi-dimensional signals. Here, we mean the domain of the signals is ``multi-dimensional" rather than the range, e.g. a vector is \ac{1d} and a matrix is \ac{2d} in the context of our framework. The multi-dimensional signals often occur in array signal processing, image processing, and other practical scenarios \cite{mishali2010from, xu2023oblivious}. These signals are distributed across multi-dimensional frequency domains, and exhibit complicated coupling relationships among these domains, rendering their analysis notably challenging. Most works \cite{slepian1964prolate,ramani2006non} consider a specific distribution and tend to find a near-optimal sampling method. The work \cite{xu2023oblivious} extends the results from \cite{avron2019leverage} to obtain an oblivious sampling method for the Fourier-constrained multi-dimensional signal. Such a method may fall sub-optimal for a specially structured signal like the multiband signal.

In this paper, we focus on the multi-dimensional multiband signals. Particularly, we mainly consider two typical cases where the uni-band is cubic or parallelepipedic, respectively. The first cubic case is prevalent in radar signal processing scenarios \cite{zhu2015wall, jonathan2018subspace}, wherein the processing typically involves discretizing the signal domain into grid points. Each grid point encapsulates the signals within its spatial cubic block. The second case is encountered in spatiotemporal adaptive processing \cite{li2022bistatic}. We primarily discuss scenarios in the more common two-dimensional setting, with the conclusion naturally extending to higher dimensions.

 Our contribution can be concluded as follows
 \begin{itemize}
     \item  For the multi-dimensional multiband signal whose subbands are cubic, we investigate its ``covariance matrix", analyze the distribution of its eigenvalues, and give the number of atoms enough to approximate the original signal in the sense of \ac{mse}. Strict proofs are provided.
     \item We show that such a multi-dimensional multiband dictionary formed by the optimal atoms can be approximated with modulated base-band DPSSs. The base-band ones can be constructed more easily, enhancing the practical value.
     \item  For the multi-dimensional multiband signal whose subbands are parallelepipeds, we theoretically analyze its ``covariance matrix" and show it possesses clustering property in the distribution of eigenvalues, which is almost the same as those with cubic subbands.
 \end{itemize}
 
We remark that our work is greatly inspired by \cite{zhu2017approximating}, which focuses on the unidimensional multiband signal. However, it is not trivial to extrapolate their findings to multi-dimensional scenarios as the complicated geometry of the subbands should be considered. This extension is challenging not only because it increases the computational complexity within the proof process, but also because it addresses the inter-dimensional coupling relationships for parallelepipedic subbands.

The remaining chapters are arranged as follows. In Section~\ref{sec:1dpre}, we introduce the model of one-dimensional multiband signals and existing theoretical results. Then in Section~\ref{sec:2dpre}, we extend the one-dimensional model to multi-dimensional multiband signals and present the notations. Section~\ref{sec:multibandlimited} provides our theoretical results concerning the properties of multi-dimensional multiband signals. The final section summarizes the entire paper.

In this paper, boldface letters like $\bm{x}$ and $\bm{B}$, except $\bm{W}$, are employed to denote tensors (including vectors and matrices). We utilize $\bm{W}$, $\mathbb{W}$, and $\mathbb{W}^{\Diamond}$, along with their superscript or subscript variations, to indicate regions in the Fourier domain. The denotations $\|\bm{W}\|$, $\|\mathbb{W}\|$, and $\|\mathbb{W}^{\Diamond}\|$ represent the Lebesgue measure of the regions. The Frobenius norm of a tensor, denoted as $\|\cdot\|_F$, is indeed defined as the square root of the sum of the squares of all its elements.



 \section{Preliminaries on \ac{1d} Multiband Signal}\label{sec:1dpre}

In this part, we first present the mathematical formulation of  \ac{1d} multiband signals. Subsequently, in order from simple to complex, we briefly review some important properties of band-limited, band-pass, and multiband signals from \cite{zhu2017approximating}. The band-limited and band-pass signals can be considered as special, simpler forms of multiband signals.

To begin with, given an \ac{1d} multiband analog signal $x(t)$, whose supporting set on the frequency domain is defined as 
\begin{equation}
    \mathbb{F} = \bigcup_{i=0}^{J-1}[F_i - B_i/2, F_i+B_i/2].
\end{equation}
Uniformly sampling the signal in the time interval $[0,NT_s)$ with sampling period $T_s$, we obtain a length-$N$ vector $\bm{x}$. Here, we assume that the sampling period $T_s$ satisfies the minimum Nyquist sampling rate. The sampled vector $\bm{x}$ also has a multiband structure as 
\begin{equation}
    \bm{x}[n] = \int_{\mathbb{W}} \tilde{x}(f) e^{j2\pi fn} \text{d}f, n = 0,1,\dots,N-1,
\end{equation}
where  the supporting set $\mathbb{W}$ is defined as 
\begin{align}
    \mathbb{W} = T_s\mathbb{F} &= \bigcup_{i=0}^{J-1}[T_s F_i - T_s B_i/2, F_i+T_s B_i/2] \notag 
    \\ 
    &= \bigcup_{i=0}^{J-1}[f_j - W_j, f_j +W_j] \subset [-\frac{1}{2},\frac{1}{2}]. 
\end{align}
The weight function $\tilde{x}(f) = \frac{1}{T_s} X(\frac{f}{T_s}), |f| \le \frac{1}{2}$, where $X()$ is the expression of the Fourier transform of $x(t)$. 

Existing works \cite{zhu2017approximating,zhu2015wall} show that such finite-length multiband vector $\bm{x}$ has a certain structure that it can be approximated by a linear combination of entries in a low-dimensional dictionary. It is assumed that $x(t)$ is a stationary stochastic signal and its power spectrum after sampling is 
\begin{equation}
    S_{xx}(f) = \left\{
    \begin{aligned}
        &1,\ \ f \in \mathbb{W},\\
        &0,\ \ {\text{else}}.
    \end{aligned}
    \right.
\end{equation}
The covariance matrix $\bm{B}_{N,\mathbb{W}}$ of $\bm{x}$ is calculated as
\begin{align}
    \bm{B}_{N,\mathbb{W}}[m,n] &=  \int_{-\frac{1}{2}}^{\frac{1}{2}} S_{xx}(f) e^{j2\pi f(m-n)} \text{d}f
        \notag \\ 
    &= \int_{\mathbb{W}} e^{j2\pi f(m-n)} \text{d}f
    \notag \\ 
    &= \sum_{i=0}^{J-1} e^{j2\pi f_i (m-n)} \frac{\sin(2\pi W_i(m-n))}{\pi(m-n)}.\label{eq:1dcovar}
\end{align}

A natural method to approximate $\bm{x}$ is performing principal component analysis on the covariance matrix $\bm{B}_{N,\mathbb{W}}$. 

\subsection{Band-limited Signal}
Before dealing with the relatively complicated matrix $\bm{B}_{N,\mathbb{W}}$, we consider a simple case where $\mathbb{W} = [-W,W], 0<W<\frac{1}{2}$, i.e. the signal $x(t)$ is a band-limited signal. We use the symbol $\bm{B}_{N,W}$ to represent the covariance matrix in this simplified case. 
The eigendecomposition of matrix $\bm{B}_{N,W}$ is
\begin{equation}
    \bm{B}_{N,W} = \bm{S}_{N,W}\bm{\Lambda}_{N,W}\bm{S}^{H}_{N,W},
\end{equation}
where the $l$-th column of $\bm{S}_{N,W}$ is the well-known DPSS vector $\bm{s}^{(l)}_{N,W}$, and $\bm{\Lambda}_{N,W}$ is a diagonal matrix with the $l$-th diagonal element being the $l$-th eigenvalue $\lambda_{N,W}^{(l)}$. The eigenvalues have a clustering property as the following lemma states.

\begin{lemma}(\cite{davenport2012compressive, slepian1978prolate})\label{lemma:1Dcluster}
1. Fixing $\epsilon \in (0,1)$, there exist constants $C_1(W,\epsilon)$, $C_2(W,\epsilon)$ and integers $N_1(W,\epsilon)$ that
\begin{equation}
\begin{aligned}
    1 - \lambda_{N,W}^{(l)} \le C_1(W,\epsilon)e^{-C_2(W,\epsilon)N} , \\ \forall l\le \lfloor 2NW (1-\epsilon) \rfloor
\end{aligned}
\end{equation}
for all $N \ge N_1(W, \epsilon)$.

2. Fixing $\epsilon \in (0,\frac{1}{2W}-1)$, there exist constants $C_3(W,\epsilon)$, $C_4(W,\epsilon)$ and integers $N_2(W,\epsilon)$ that
\begin{equation}
\begin{aligned}
     \lambda_{N,W}^{(l)} \le C_3(W,\epsilon)e^{-C_4(W,\epsilon)N} , \\ \forall l\ge \lceil 2NW (1+\epsilon) \rceil
\end{aligned}
\end{equation}
for all $N \ge N_2(W, \epsilon)$.
\end{lemma}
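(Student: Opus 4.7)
The plan is to reduce the eigenvalue question to a concentration statement about discrete signals. Observe that $\bm{B}_{N,W}$ can be written as $\bm{T}_N \mathcal{F}^{-1} \bm{M}_W \mathcal{F} \bm{T}_N^\top$, where $\bm{T}_N$ is the restriction to $\{0,\ldots,N-1\}$, $\mathcal{F}$ denotes the discrete-time Fourier transform on $\ell^2(\mathbb{Z})$, and $\bm{M}_W$ is multiplication by the indicator of $[-W,W]$. Consequently $\bm{B}_{N,W}$ is a compression of a product of two orthogonal projections, so $0 \preceq \bm{B}_{N,W} \preceq I$, and $\lambda_{N,W}^{(l)} = \langle \bm{s}^{(l)}_{N,W},\, \bm{B}_{N,W}\, \bm{s}^{(l)}_{N,W}\rangle$ equals the fraction of energy of the $l$-th DPSS that lies in $[-W,W]$. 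This identification fixes the scale, because $\mathrm{tr}(\bm{B}_{N,W})=2NW$ already forces the bulk of the spectrum to cluster near the count $2NW$.

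Next I would apply the Courant--Fischer min--max principle to each half of the statement. For the first part, I would construct an $\lfloor 2NW(1-\epsilon)\rfloor$-dimensional test subspace $V_-$ spanned by smoothly windowed complex exponentials at frequencies packed inside $[-W(1-\epsilon/2),W(1-\epsilon/2)]$, then estimate by Parseval that each such atom leaks at most $C_1(W,\epsilon)e^{-C_2(W,\epsilon)N}$ of its energy outside $[-W,W]$; this forces $\min_{x\in V_-}\langle x,\bm{B}_{N,W}x\rangle/\|x\|^2 \ge 1 - C_1 e^{-C_2 N}$, and the claimed lower cluster bound follows directly. For the second part I would, dually, construct an $(N-\lceil 2NW(1+\epsilon)\rceil)$-dimensional subspace $V_+$ of atoms concentrated in $|f|\ge W(1+\epsilon/2)$, apply min--max to $I-\bm{B}_{N,W}$, and obtain the matching upper bound on $\lambda_{N,W}^{(l)}$ for large indices.

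The main obstacle is converting the spectral gap $\epsilon$ into genuinely \emph{exponential} decay, rather than the polynomial or $1/\log N$ rates a naive trace computation would give (for instance $\sum_l \lambda_l(1-\lambda_l)= \mathrm{tr}(\bm{B}_{N,W})-\mathrm{tr}(\bm{B}_{N,W}^2) = O(\log N)$ only bounds the number of transitional eigenvalues). The exponential strength has to come from the analytic structure of DPSSs: each DPSS extends to an entire function of exponential type, and the out-of-band energy of a windowed exponential can then be controlled by a Bernstein-type inequality on the window's Fourier transform, producing the required $e^{-C_2 N}$ and $e^{-C_4 N}$ rates. Equivalently, one can invoke Slepian's commuting tridiagonal matrix and WKB asymptotics on the resulting discrete Sturm--Liouville problem. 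Since the statement is quoted verbatim from \cite{slepian1978prolate,davenport2012compressive}, I would content myself with sketching the above plan and referring the reader to the detailed derivations there, favouring the Bernstein/analytic-extension route because it generalizes cleanly to the multi-dimensional modulated DPSS constructions that the rest of the paper needs.
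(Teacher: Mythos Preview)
The paper does not prove this lemma at all: it is stated as a quotation from \cite{slepian1978prolate,davenport2012compressive} and used as a black box, with no argument or sketch supplied. Your proposal therefore already exceeds what the paper does, and your final decision---to outline the min--max and analytic-extension reasoning but defer the detailed derivation to the cited references---coincides exactly with the paper's treatment.

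As a brief remark on the sketch itself: the Courant--Fischer strategy with windowed-exponential test spaces is a legitimate route, and you are right that the naive trace argument only yields a logarithmic count of transitional eigenvalues rather than exponential decay. The two mechanisms you name for upgrading to exponential rates are indeed the standard ones. Slepian's original proof proceeds via the commuting tridiagonal (second-order difference) operator and asymptotic analysis of the resulting discrete Sturm--Liouville problem, while Davenport and Wakin obtain the constants $C_1,\dots,C_4$ more directly from Slepian's asymptotics. Either reference fully substantiates the lemma, so citing them, as both you and the paper do, is appropriate.
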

\noindent 
This lemma shows that there are approximately $2NW$ eigenvalues close to 1 and the rest are close to 0. Consequently, one can use the first $2NW$ eigenvectors to construct a sparse approximate representation of $\bm{x}$. 

\subsection{Band-pass Signal}
Further, we consider the band-pass signal whose supporting set in the frequency domain is $[f_c - W, f_c + W]$. The covariance matrix $\bm{B}_{N,[-W+f_c,W+f_c]}$ of such a signal can be viewed as $\bm{B}_{N,W}$ with a frequency modulation, that is $\bm{B}_{N,[-W+f_c,W+f_c]} = \bm{E}_{f_c}\bm{B}_{N,W}\bm{E}_{f_c}^{H}$. The modulation factor $\bm{E}_{f_c}$ is a diagonal matrix defined as
\begin{equation}
    \bm{E}_{f_c}[m,n] = \left\{
    \begin{aligned}
        &e^{j2\pi f_c m}, \ \ m=n, \\
        &0,\ \ m \neq n.
    \end{aligned}
    \right.
\end{equation}
From the special structure, it can be verified that
\begin{equation}
    \bm{B}_{N,[-W+f_c,W+f_c]} = \bm{E}_{f_c}\bm{S}_{N,W}\bm{\Lambda}_{N,W}(\bm{E}_{f_c}\bm{S}_{N,W})^{H}.
\end{equation}
Therefore, $(\lambda_{N,W}^{(l)},\bm{E}_{f_c}\bm{s}_{N,W}^{(l)})$ is a pair of eigenvalue and eigenvector of  $ \bm{B}_{N,[-W+f_c,W+f_c]}$. Thus, its eigenvalues have the same clustering property as Lemma~\ref{lemma:1Dcluster} states.

\subsection{Multiband Signal}
Though the above conclusions depict the clustering property of eigenvalues and show the way to approximate band-limited and band-pass signals, the extension to multiband signals is still hard. Based on a series of mathematical techniques, \cite{zhu2017approximating} proves that  $\bm{B}_{N,\mathbb{W}}$ also has a cluttering phenomenon in its distribution of eigenvalues. The result is recited as follows.
\begin{theorem}(\cite[Theorem 3.4]{zhu2017approximating})\label{lemma:1Dmulticluster}
Let $\mathbb{W} \subset [-\frac{1}{2},\frac{1}{2}]$ be formed by $J$ disjoint intervals.
1. Fix $\epsilon \in (0,1)$, there exist constants $\hat{C}_1(\mathbb{W},\epsilon)$, $\hat{C}_2(\mathbb{W},\epsilon)$ and integers $\hat{N}_1(\mathbb{W},\epsilon)$ that
\begin{equation}
\begin{aligned}
    1 - \lambda_{N,\mathbb{W}}^{(l)} \le \hat{C}_1(\mathbb{W},\epsilon)N^2 e^{-\hat{C}_2(\mathbb{W},\epsilon)N} , \\ \forall l\le J-1+\sum_i\lfloor  2NW_i (1-\epsilon) \rfloor
\end{aligned}
\end{equation}
for all $N \ge \hat{N}_1(\mathbb{W}, \epsilon)$.

2. Fix $\epsilon \in (0,\frac{1}{|\mathbb{W}|}-1)$, there exist constants $\hat{C}_3(\mathbb{W},\epsilon)$, $\hat{C}_4(\mathbb{W},\epsilon)$ and integers $\hat{N}_2(\mathbb{W},\epsilon)$ that
\begin{equation}
\begin{aligned}
    \lambda_{N,\mathbb{W}}^{(l)} \le \hat{C}_3(\mathbb{W},\epsilon) e^{-\hat{C}_4(\mathbb{W},\epsilon)N} , \\ \forall l\ge\sum_i\lceil  2NW_i (1+\epsilon) \rceil
\end{aligned}
\end{equation}
for all $N \ge \hat{N}_2(\mathbb{W}, \epsilon)$.
\end{theorem}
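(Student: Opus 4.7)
The plan rests on the additive decomposition, immediate from \eqref{eq:1dcovar},
\begin{equation}
\bm{B}_{N,\mathbb{W}}\;=\;\sum_{i=0}^{J-1}\bm{E}_{f_i}\,\bm{B}_{N,W_i}\,\bm{E}_{f_i}^H,
\end{equation}
together with the band-pass observation already used in Section~\ref{sec:1dpre} that each summand is unitarily similar to the band-limited matrix $\bm{B}_{N,W_i}$. Consequently the two halves of Lemma~\ref{lemma:1Dcluster} apply summand-by-summand with parameters $(W_i,\epsilon)$, and both parts of the theorem should be reducible to that lemma through purely additive spectral inequalities.

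\emph{Part~2 (eigenvalues near $0$).} Here I would apply Weyl's (Ky Fan) additive inequality to the decomposition: for any nonnegative integers $l_0,\dots,l_{J-1}$ with $l_0+\cdots+l_{J-1}\le N-1$,
\begin{equation}
\lambda_{N,\mathbb{W}}^{(l_0+\cdots+l_{J-1})}\;\le\;\sum_{i=0}^{J-1}\lambda_{N,W_i}^{(l_i)}.
\end{equation}
Choosing $l_i=\lceil 2NW_i(1+\epsilon)\rceil$ and invoking the second part of Lemma~\ref{lemma:1Dcluster} bounds each right-hand term by $C_3(W_i,\epsilon)\,e^{-C_4(W_i,\epsilon)N}$. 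Absorbing the $J$ summands into $\hat C_3(\mathbb{W},\epsilon):=\sum_i C_3(W_i,\epsilon)$ and $\hat C_4(\mathbb{W},\epsilon):=\min_i C_4(W_i,\epsilon)$ then yields the claim. The hypothesis $\epsilon\in(0,1/|\mathbb{W}|-1)$ merely ensures $\sum_i 2NW_i(1+\epsilon)<N$, so that the chosen indices remain admissible.

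\emph{Part~1 (eigenvalues near $1$).} I would reduce to Part~2 via the complementary identity
\begin{equation}
\bm{I}-\bm{B}_{N,\mathbb{W}}\;=\;\bm{B}_{N,\mathbb{W}^c},\qquad \mathbb{W}^c=[-\tfrac12,\tfrac12]\setminus\mathbb{W},
\end{equation}
which holds because $\bm{B}_{N,[-1/2,1/2]}=\bm{I}$. The set $\mathbb{W}^c$ is itself a multiband union of at most $J+1$ intervals of total measure $1-|\mathbb{W}|$, and the eigenvalue pairing $1-\lambda_{N,\mathbb{W}}^{(l)}=\lambda_{N,\mathbb{W}^c}^{(N-1-l)}$ turns the Part~2 bound, applied to $\bm{B}_{N,\mathbb{W}^c}$, into an exponential upper bound on $1-\lambda_{N,\mathbb{W}}^{(l)}$. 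Choosing an auxiliary parameter $\epsilon'=|\mathbb{W}|\epsilon/(2(1-|\mathbb{W}|))\in(0,1/|\mathbb{W}^c|-1)$, a short bookkeeping of ceilings and floors matches precisely the index range $l\le J-1+\sum_i\lfloor 2NW_i(1-\epsilon)\rfloor$ for all $N$ above some threshold $\hat N_1(\mathbb{W},\epsilon)$ that scales like $(2J+1)/(|\mathbb{W}|\epsilon)$.

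\emph{Expected main obstacle.} The Weyl-plus-complement scheme above only yields a plain exponential bound in Part~1, whereas the theorem asserts the sharper $\hat C_1 N^2\,e^{-\hat C_2 N}$. Recovering the polynomial prefactor $N^2$ almost certainly requires a finer matrix-perturbation argument --- tracking the off-diagonal Gram couplings $\bm{S}_{N,W_j}^H\bm{E}_{f_j-f_i}\bm{S}_{N,W_i}$ between DPSS bases of distinct sub-bands, which contribute $O(N)$ ``collision'' factors --- rather than just the coarse additive Weyl bound. Managing this coupling while keeping the constants $\hat C_1,\hat C_2$ dependent only on $\mathbb{W}$ and $\epsilon$ is the step I expect to require the bulk of the technical effort.
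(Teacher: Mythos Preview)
This theorem is quoted from \cite{zhu2017approximating} and the present paper does not give its own proof; however, the paper's proofs of the two-dimensional analogues (Theorems~\ref{thm:cluster0} and~\ref{thm:cluster1} in Appendices~\ref{proof:cluster0} and~\ref{ap:cluster1}) follow the same template as the original, so one can compare against those.

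Your Part~2 is exactly the argument the paper uses for Theorem~\ref{thm:cluster0}: Weyl's additive inequality applied to the decomposition into single-band pieces, followed by Lemma~\ref{lemma:1Dcluster} on each piece. Nothing to add.

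Your Part~1, however, takes a genuinely different route. The paper (mirroring \cite{zhu2017approximating}) does \emph{not} pass to the complement $\mathbb{W}^c$; instead it builds a dictionary of modulated DPSS vectors, shows they are ``$\epsilon$-pseudo eigenvectors'' of $\bm{B}_{N,\mathbb{W}}$ with small residual, controls the off-diagonal Gram entries between DPSS vectors from different sub-bands, bounds the Gram matrix via Gershgorin, and finishes with a trace comparison. It is precisely this Gram/Gershgorin step that produces the polynomial factor---$N^2$ in one dimension, $M^2N^2$ in Theorem~\ref{thm:cluster1}. Your complement trick sidesteps all of that machinery and is considerably shorter.

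Your ``expected main obstacle'' is a misreading of the inequality. The bound $\hat C_1 N^2 e^{-\hat C_2 N}$ is \emph{weaker} than a pure exponential $Ce^{-cN}$: since $N^2\ge 1$ for $N\ge 1$, any bound of the form $1-\lambda\le Ce^{-cN}$ already implies $1-\lambda\le CN^2 e^{-cN}$ with the same constants. So your complement argument, once the index bookkeeping is done (and your choice $\epsilon'=|\mathbb{W}|\epsilon/(2(1-|\mathbb{W}|))$ together with $\hat N_1\gtrsim (2J+1)/(|\mathbb{W}|\epsilon)$ handles that correctly), proves a statement strictly stronger than Part~1 as written. There is no missing $N^2$ to recover; the $N^2$ in the theorem is an artifact of the pseudo-eigenvector technique, not an intrinsic feature of the spectrum. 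The only small caveat is to check that each complementary half-width $W'_k$ is strictly below $1/2$ so that Lemma~\ref{lemma:1Dcluster}(2) applies with some $\epsilon'>0$; this holds because $\mathbb{W}$ is nonempty, and the resulting per-interval constraint $\epsilon'<1/(2W'_k)-1$ is implied by your global constraint $\epsilon'<1/|\mathbb{W}^c|-1$.
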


Figure~\ref{fig:1dillu} offers an example of the above theorems. We consider an \ac{1d} multiband signal whose power spectrum is shown in Figure ~\ref{fig:1dspec}. The power spectrum density is 1 in intervals $[-0.15,-0.05]$ and $[0.15,0.25]$, and is 0 in other regions. Thus, the total measure $\|\mathbb{W}\|$ is $0.2$. Sample $N = 256$ points evenly at intervals of 1 on this signal, and we can compute the covariance matrix of the 256 samplings as \eqref{eq:1dcovar}. Figure~\ref{fig:1deigen} presents the distribution of the eigenvalues of the covariance matrix. As stated in Theorem~\ref{lemma:1Dmulticluster}, about $N\|\mathbb{W}\|$ eigenvectors have eigenvalues near 1 while the others have eigenvalues near 0.

\begin{figure}\label{fig:1dillu}
    \centering
    \subfigure[Power spectrum]{\includegraphics[width = 0.85\linewidth]{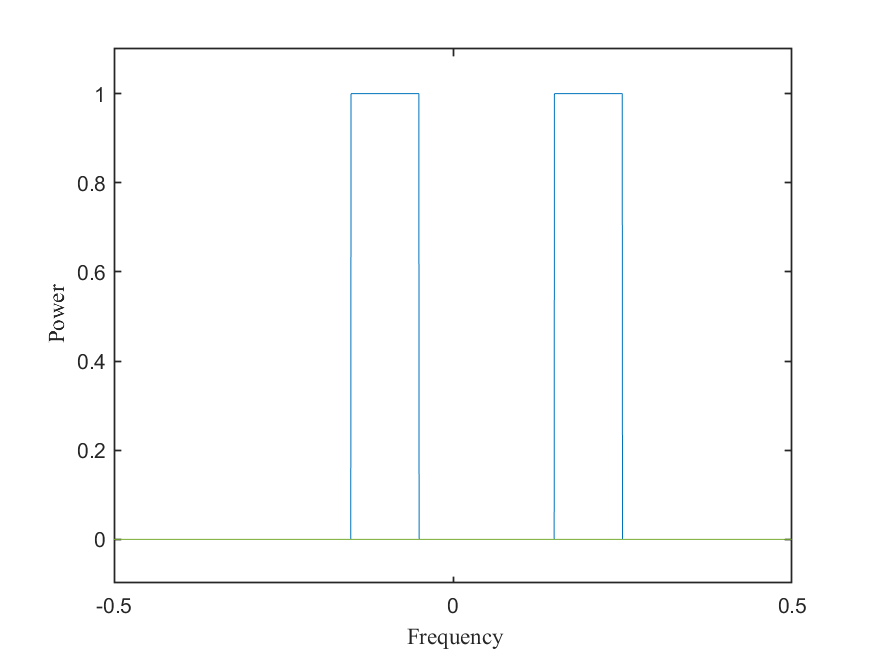}  }\label{fig:1dspec}
    \subfigure[The distribution of the eigenvalues]{\includegraphics[width = 0.85\linewidth]{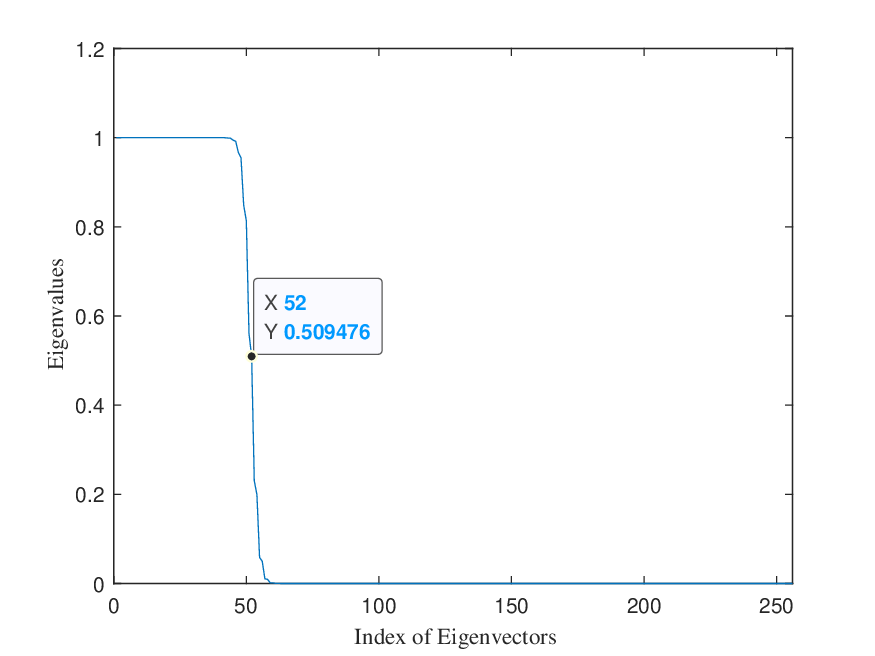}  } \label{fig:1deigen}
    \caption{An example of \ac{1d} multiband signal, whose energy is located in two disjoint intervals on the frequency range $[=0.15,-0.05]$ and $[0.15,0.25]$.}
\end{figure}

Similarly, this theorem indicates that about $N\|\mathbb{W}\|$ fixed vectors is enough to give an approximation for multiband signals whose supporting set is $\mathbb{W}$.

 \section{Multi-dimensional Multiband Signal}\label{sec:2dpre}
 In this part, we formulate our problem in multi-dimensional cases, paving the way for our theoretical analysis in the following sections. Different from \ac{1d} cases, the subbands in multi-dimensional cases have different shapes, among which we concentrate on the cubic and parallelepipedic scenarios. Different shapes with multi-dimensions increase the difficulty of analysis and symbolic expression.
 Therefore, instead of directly investigating the covariance matrix, we define a band and index-limited linear operator following \cite{zhu2017approximating} and consider the eigenvalues of the operator, which are the same as those of the covariance matrix. 

We firstly focus on the $d$-dimensional signals $x(\bm{t})$ with cubic subbands, which can be formulated by 
\begin{equation}
    x(\bm{t}) = \int_{\mathbb{F}} X(\bm{F}) e^{j2\pi \left<\bm{F}, \bm{t}\right>} \text{d}\bm{F},
\end{equation}
where the frequency vector $\bm{F}$ and the union of $J$ bands $\mathbb{F}$ is given by
\begin{align}
    \bm{F} &= \left[F^{[0]}, F^{[1]}, \cdots, F^{[d-1]}\right]^{T}, \\
    \mathbb{F} &= \bigcup_{i=0}^{J-1} \left[F^{[0]}_i - B^{[0]}_i, F^{[0]}_i + B^{[0]}_i\right] 
    \times \left[F^{[1]}_i - B^{[1]}_i, F^{[1]}_i + B^{[1]}_i\right] \notag \\
    &\times \cdots \times \left[F^{[n-1]}_i - B^{[n-1]}_i, F^{[n-1]}_i + B^{[n-1]}_i\right] \notag \\
    &= \bigcup_{i=0}^{J-1} \left[\bm{F}_i - \bm{B}_i,\bm{F}_i + \bm{B}_i \right].
\end{align}

Then let $\bm{x}$ denote the vector obtained by uniformly sampling $x(\bm{t})$ which satisfies the Nyquist sampling rate in each dimension with periods $\bm{T}_s$. With $\mathcal{N}$ denoting $\bigtimes_{i=0}^{d} \{0,1,\dots,N_i -1\}$, the sampled signal $\bm{x}$ is given by 
\begin{equation}
    \bm{x}[\bm{n}] = \int_{\mathbb{W}} \tilde{x}(\bm{f}) e^{j2\pi \left<\bm{f},\bm{n}\right>} \text{d}\bm{f},  \bm{n} \in \mathcal{N},
\end{equation}
where $\tilde{x}(\bm{f})$ is the scaled CTFT of $x(\bm{t})$ and the union of sampled bands $\mathbb{W}$ is given by
\begin{equation}
    \mathbb{W} = \bm{T}_s \circ \mathbb{F}.
\end{equation}

We are interested in the dictionary of the signal $\bm{x}$, i.e., how many tensors are enough to approximate $\bm{x}$. A natural approach is to examine its covariance ``matrix" and apply PCA. Yet, directly using the covariance ``matrix" brings inconvenience in both expressions and calculating. Below, we construct an operator that has the same eigenvalues and eigen-tensors as the covariance ``matrix".

To begin with, let $\mathcal{B}_{\mathbb{W}}: \ell_{2}(\mathbb{Z}^{d}) \rightarrow \ell_{2}(\mathbb{Z}^{d})$ denote the high-dimension multiband-limiting operator. We have
\begin{equation}
\begin{aligned}
    \mathcal{B}_{\mathbb{W}}(y)[\bm{m}] &= \int_{\mathbb{W}} e^{j2\pi \left<\bm{f}, \bm{m}\right>} \text{d}\bm{f} \star y[\bm{m}]\\
    &= \sum_{\bm{n} = -\infty^{d}}^{\infty^d}\left(y[\bm{n}] \int_{\mathbb{W}} e^{j2\pi \left<\bm{f}, (\bm{m-n})\right>} \text{d}\bm{f}\right).
\end{aligned}
\end{equation}

Further define the index-limiting operator $ \mathcal{I}_{\mathcal{N}} : \ell_{2}(\mathbb{Z}^d) \rightarrow \mathbb{C}^{\times_{i=0}^{d-1} N_{i}}$ and its inverse operator $\mathcal{I}_{\mathcal{N}}^{*}$, where $\mathcal{N} := \{0,1,\cdots,N_0 - 1\} \times \{0,1,\cdots,N_1 - 1\}\times \cdots \times \{0,1,\cdots,N_{d-1} - 1\}$

\begin{equation}
    \mathcal{I}_{\mathcal{N}}(y)[\bm{m}] = y[\bm{m}], \bm{m} \in \mathcal{N}
\end{equation}

\begin{equation}
    \mathcal{I}^*_{\mathcal{N}}(\bm{y})[\bm{m}] := \begin{cases}
    &\bm{y}[\bm{m}], \bm{m} \in \mathcal{N}\\
    &0, \ \ \text{otherwise}.
    \end{cases}
\end{equation}

Combining these operators, we obtain the linear operator $\mathcal{B}_{\bm{N},\mathbb{W}}$ which makes the signal $\bm{y}$ band-limited and index-limited as
\begin{equation}\label{eq:opdef}
\begin{aligned}
    \mathcal{B}_{\bm{N}, \mathbb{W}}(\bm{y})[\bm{m}]&:= \mathcal{I}_{\bm{N}}(\mathcal{B}_{\mathbb{W}}(\mathcal{I}_{\bm{N}}^{*}(\bm{y})))[\bm{m}] \\
    &= \sum_{\bm{n} \in \mathcal{N}}\left(\bm{y}[\bm{n}] \int_{\mathbb{W}} e^{j2\pi \left<\bm{f},(\bm{m}-\bm{n})\right>  \text{d}f}\right), \bm{m} \in \mathcal{N}.
\end{aligned}
\end{equation}






In multi-dimensional cases, for brevity, we use $\tilde{W}^{[j]}_i$ denote the interval $[-W^{[j]}_i+f^{[j]}_i,W^{[j]}_i+f^{[j]}_i]$, which represent the range of the $i$-th narrow band in the $j$-th dimension.
In \ac{2d} cases where $\bm{y} \in \mathbb{C}^{M \times N}$ is a matrix or \ac{2d} tensor, one can write $\mathcal{B}_{\bm{N},\mathbb{W}}$ as follows

\begin{equation}
    \mathcal{B}_{\bm{N},\mathbb{W}}(\bm{y}):=  \sum_{i=0}^{J-1} \bm{B}_{M, \tilde{W}^{[0]}_i} \  \bm{y} \ \bm{B}^{T}_{N, \tilde{W}^{[1]}_i},
\end{equation}
or equivalently
\begin{equation} \label{eq:2dkrone}
    \mathcal{B}_{\bm{N},\mathbb{W}}(\bm{y}):=  
    \text{ivec}_{\bm{y}}\left(\left(\sum_{i=0}^{J-1} \bm{B}_{N, \tilde{W}^{[1]}_i} \otimes \bm{B}_{M, \tilde{W}^{[0]}_i}\right) \text{vec}({\bm{y}})\right).
\end{equation}
Here, the function $\text{ivec}_{\bm{y}}(\cdot)$ rearranges a vector to a tensor whose size is the same as $\bm{y}$. Then, in $d$-dimensional space, the extended $\mathcal{B}_{\bm{N},\mathbb{W}}(\bm{y})$ can be defined as 
\begin{equation}
\begin{aligned}
    \mathcal{B}_{\bm{N},\mathbb{W}}(\bm{y})&:=  \sum_{i=0}^{J-1} 
    \text{ivec}_{\bm{y}}\left(\left(\bm{B}_{N_{d-1}, \tilde{W}^{[d-1]}_i} \otimes \bm{B}_{N_{d-2}, \tilde{W}^{[d-2]}_i} \right.\right.\\
    &\left.\left.\otimes \cdots \otimes \bm{B}_{N_{0}, \tilde{W}^{[0]}_i}\right) \text{vec}({\bm{y}})\right).
\end{aligned}
\end{equation}

One can verify that $\sum_{i=0}^{J-1} 
\bm{B}_{N_{d-1}, \tilde{W}^{[d-1]}_i} \otimes \bm{B}_{N_{d-2}, \tilde{W}^{[d-2]}_i}\otimes \cdots \otimes \bm{B}_{N_{0}, \tilde{W}^{[0]}_i}$ is exactly the variance matrix of the original multi-dimensional signal after sampling and vectorization. Therefore, the time- and band-limited operator $\mathcal{B}_{\bm{N},\mathbb{W}}$ and the ``variance matrix" share the same eigenvalues and eigen-tensors.
In \ac{1d} cases, the operator $\mathcal{B}_{\bm{N},\mathbb{W}}$ degrades to the matrix form $\sum_{i=0}^{J-1}\bm{B}_{N,W_i}$, whose eigenvalues are discussed in Theorem~\ref{thm:sharp}.

One important difference between the \ac{2d} case and the \ac{1d} case lies in the fact that the uni-bands of the \ac{2d} case may have various shapes. 
In addition to the cubic case, we consider the scenario that the uni-band is parallelepipedic, which can be viewed as being obtained through a certain affine transformation of a cube. 
Similarly, we can define the union of disjoint parallelepipedic subbands $\mathbb{W}^{\Diamond} = \bigcup_i \mathbb{W}^{\Diamond}_i$ and its time- and band-limited operator $\mathcal{B}_{\bm{N},\mathbb{W}^{\Diamond}}$ as \eqref{eq:opdef}. 
\begin{equation}
\begin{aligned}
    \mathcal{B}_{\bm{N}, \mathbb{W}^{\Diamond}}(\bm{y})[\bm{m}]&:= \mathcal{I}_{\bm{N}}(\mathcal{B}_{\mathbb{W}}(\mathcal{I}_{\bm{N}}^{*}(\bm{y})))[\bm{m}] \\
    &= \sum_{\bm{n} \in \mathcal{N}}\left(\bm{y}[\bm{n}] \int_{\mathbb{W}^{\Diamond}} e^{j2\pi \left<\bm{f},(\bm{m}-\bm{n})\right>}  \text{d}f\right), \bm{m} \in \mathcal{N}.
\end{aligned}
\end{equation}
 When the original stochastic continuous signal is stationary and its power spectrum is 1 on its supporting sets, the eigenvalues of $\mathcal{B}_{\bm{N},\mathbb{W}^{\Diamond}}$ are also the eigenvalues of its ``covariance matrix". However, it cannot be unfolded to a Kronecker dot form, because there exist cross-terms in this signal across different dimensions.


 \section{Main Result}

 Our main results are separated into three parts. Firstly, we prove that the eigen-tensors of the time- and band-limited operator for the multi-dimensional band-limited signals are formed by the outer products of DPSS bases, and its eigenvalues have a clustering property in Section~\ref{sec:multibandlimited}. Then, we consider the multi-dimensional multiband signals and show the eigenvalues of its time- and band-limited operator has a clustering property in Section~\ref{sub:multiband}. Thus, one can form a low-dimensional dictionary with part of its eigen-tensors to approximate the original signal. We show that the dictionary can be approximated by the outer products of DPSS bases in Section~\ref{sub:appro}, offering an efficient way to construct a dictionary with low computational complexity. In Section~\ref{sub:eigenpara}, we examine the case of parallelepipedic subbands and analyze the distribution of eigenvalues.
We mainly concern the \ac{2d} cases, which can be easily extended to higher dimensions.

\subsection{Eigenvalues and Eigen-Tensors for Multi-dimensional Band-Limited Signals with Cubic Subbands}\label{sec:multibandlimited}

Consider the base-band case that $\mathcal{B}_{\bm{N},\bm{W}} (\bm{y}) = \bm{B}_{M, W^{[0]}} \bm{y} \bm{B}^{T}_{N,W^{[1]}}$. The following equation states the relationship between the eigenvalues and the eigentensors of \ac{2d} cases and those of \ac{1d} cases.

\begin{equation}
    \begin{aligned}
         &\bm{B}_{M, W^{[0]}} \bm{s}^{(l)}_{M,W^{[0]}} {\bm{s}^{(k)}_{N,W^{[1]}}}^{T} \bm{B}^{T}_{N,W^{[1]}}\\
         = & \bm{B}_{M, W^{[0]}} \bm{s}^{(l)}_{M,W^{[0]}} \left( \bm{B}_{N,W^{[1]}} \bm{s}^{(k)}_{N,W^{[0]}} \right)^{T} \\
         =& \lambda_{M, W^{[0]}}^{(l)}\lambda_{N, W^{[1]}}^{(k)}\bm{s}^{(l)}_{M,W^{[0]}} {\bm{s}^{(k)}_{N,W^{[1]}}}^{T}.
    \end{aligned}
\end{equation}
In the above equation, $\lambda_{M, W^{[0]}}^{(l)}$ and $\bm{s}^{(l)}_{M,W^{[0]}}$ ($\lambda_{N, W^{[1]}}^{(k)}$ and $\bm{s}^{(k)}_{N,W^{[1]}}$) are the $l$-th ($k$-th) eigenvalue and the corresponding eigentensor of the matrix $\bm{B}_{M, W^{[1]}}$ ($\bm{B}_{N, W^{[1]}}$). The equation shows that the \ac{2d} operator $\mathcal{B}_{\bm{N},\bm{W}}$ has $MN$ eigentensors whose forms are $\bm{S}_{\bm{N},\bm{W}}^{(p)} = \bm{s}^{(l)}_{M,W^{[0]}} {\bm{s}^{(k)}_{N,W^{[1]}}}^{T}$, for some $l \in {0,1, \cdots, M-1}$ and $k \in {0,1, \cdots, N-1}$. The corresponding eigenvalues $\lambda_{\bm{N},\bm{W}}^{(p)} = \lambda_{M, W^{[0]}}^{(l)}\lambda_{N, W^{[1]}}^{(k)}$.


The following theorem describes the clustering property of $\lambda_{\bm{N},\bm{W}}^{p}$.

\begin{theorem}\label{thm:2Dcluster}
1. Fixing $\epsilon \in (0,1)$, there exist constants $\bar{C}_1 (\bm{W},\epsilon)$, $\bar{C}_2 (\bm{W},\epsilon)$, and integers $\bar{M}_1(\bm{W},\epsilon)$, $\bar{N}_1(\bm{W},\epsilon)$ that
\begin{equation}
\begin{aligned}
    1 - \lambda_{\bm{N},\bm{W}}^{(p)} &\le \bar{C}_1(\bm{W},\epsilon)e^{-\bar{C}_2(\bm{W},\epsilon)\min{(M,N)}} \\
    &\forall p\le \lfloor4MNW^{[0]}W^{[1]} (1-\epsilon)\rfloor
\end{aligned}
\end{equation}
for all $N \ge \bar{M}_1(\bm{W},\epsilon), M \ge \bar{N}_1(\bm{W},\epsilon)$.

2. Fixing $\epsilon \in (0,\frac{1}{4W^{[0]}W^{[1]}}-1)$,
there exist constants $\bar{C}_3 (\bm{W},\epsilon)$, $\bar{C}_4 (\bm{W},\epsilon)$ and integers $\bar{M}_2(\bm{W},\epsilon)$, $\bar{N}_2(\bm{W},\epsilon)$ that
\begin{equation}
\begin{aligned}
   \lambda_{\bm{N},\bm{W}}^{(p)} &\le \bar{C}_3(\bm{W},\epsilon)e^{-\bar{C}_4(\bm{W},\epsilon)\min{(M,N)}} \\
    &\forall p\ge \lceil4NMW^{[0]}W^{[1]} (1+\epsilon)\rceil
\end{aligned}
\end{equation}
for all $M \ge \bar{M}_2(\bm{W},\epsilon), N \ge \bar{N}_2(\bm{W},\epsilon)$.

\end{theorem}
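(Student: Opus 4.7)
The plan is to exploit the factorization, noted in the passage immediately preceding the theorem, that every eigenvalue of $\mathcal{B}_{\bm{N},\bm{W}}$ in the cubic base-band case is of the form $\lambda^{(l)}_{M,W^{[0]}}\lambda^{(k)}_{N,W^{[1]}}$ for some $(l,k) \in \{0,\dots,M-1\} \times \{0,\dots,N-1\}$, and that all $MN$ such products exhaust the spectrum. It therefore suffices to count, among these $MN$ numbers in $[0,1]$, how many lie exponentially close to $1$ and how many are not exponentially close to $0$, invoking the one-dimensional clustering result Lemma~\ref{lemma:1Dcluster} separately along each axis.

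For part 1, fix $\epsilon \in (0,1)$ and choose auxiliary parameters $\epsilon'_0, \epsilon'_1 \in (0,1)$ with $(1-\epsilon'_0)(1-\epsilon'_1) \ge 1-\epsilon$; the symmetric choice $\epsilon'_0 = \epsilon'_1 = 1-\sqrt{1-\epsilon}$ suffices. Lemma~\ref{lemma:1Dcluster}.1 then furnishes constants such that every index $l \le \lfloor 2MW^{[0]}(1-\epsilon'_0)\rfloor$ satisfies $1 - \lambda^{(l)}_{M,W^{[0]}} \le C_1 e^{-C_2 M}$, and similarly along the second axis. The elementary inequality $1 - ab \le (1-a) + (1-b)$ valid for $a,b \in [0,1]$ yields
\begin{equation*}
  1 - \lambda^{(l)}_{M,W^{[0]}}\lambda^{(k)}_{N,W^{[1]}} \le (C_1 + C_1') \, e^{-\min(C_2, C_2')\,\min(M,N)}
\end{equation*}
for every admissible pair, and the number of those pairs is at least $\lfloor 2MW^{[0]}(1-\epsilon'_0)\rfloor \cdot \lfloor 2NW^{[1]}(1-\epsilon'_1)\rfloor$. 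For $M,N$ above a threshold $\bar{M}_1(\bm{W},\epsilon), \bar{N}_1(\bm{W},\epsilon)$ the floor losses are absorbed and this product exceeds $\lfloor 4MNW^{[0]}W^{[1]}(1-\epsilon)\rfloor$, giving the stated count with $\bar{C}_1 = C_1 + C_1'$ and $\bar{C}_2 = \min(C_2,C_2')$.

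For part 2, the delicate step is selecting $\epsilon'_0, \epsilon'_1 > 0$ that simultaneously satisfy $(1+\epsilon'_0)(1+\epsilon'_1) \le 1+\epsilon$ and the individual admissibility conditions $1+\epsilon'_j < 1/(2W^{[j]})$ required by Lemma~\ref{lemma:1Dcluster}.2. The hypothesis $\epsilon < 1/(4W^{[0]}W^{[1]}) - 1$ is precisely what guarantees the feasibility of this system inside the open rectangle $(0, 1/(2W^{[0]})-1) \times (0, 1/(2W^{[1]})-1)$. Once such $\epsilon'_j$ are fixed, Lemma~\ref{lemma:1Dcluster}.2 shows that if either $l \ge \lceil 2MW^{[0]}(1+\epsilon'_0) \rceil$ or $k \ge \lceil 2NW^{[1]}(1+\epsilon'_1) \rceil$ then the corresponding factor is exponentially small; since the other factor lies in $[0,1]$, the product inherits that bound. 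The remaining ``potentially large'' pairs number at most $\lceil 2MW^{[0]}(1+\epsilon'_0)\rceil \cdot \lceil 2NW^{[1]}(1+\epsilon'_1)\rceil$, which falls strictly below $\lceil 4MNW^{[0]}W^{[1]}(1+\epsilon)\rceil$ for $M,N$ sufficiently large, completing the count. I expect the main obstacle to be precisely this juggling of the two auxiliary parameters against the single scalar constraint on $\epsilon$, together with the floor/ceiling bookkeeping; once those are handled, the rest is a direct reduction to the one-dimensional lemma.
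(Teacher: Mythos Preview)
Your proposal is correct and matches the paper's proof essentially step for step: the same symmetric choice $\epsilon'=1-\sqrt{1-\epsilon}$ for part~1, the same use of $1-ab\le(1-a)+(1-b)$ and $ab\le\min(a,b)$, and the same reduction of the part~2 feasibility to the condition $\epsilon<1/(4W^{[0]}W^{[1]})-1$. The only cosmetic difference is that the paper writes down an explicit one-parameter family $\epsilon_0=\epsilon_*(1/2W^{[0]}-1)$, $\epsilon_1=\epsilon_*(1/2W^{[1]}-1)$ for the auxiliary pair in part~2, whereas you argue feasibility abstractly; both are fine.
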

\begin{proof}
See Appendix~\ref{proof:2Dcluster}.
\end{proof}
The theorem indicates that about $4NMW^{[0]}W^{[1]}$ eigenvalues of $\mathcal{B}_{\bm{N},\bm{W}}$ is close to 1, while the others are close to 0.

When $\bm{W}$ is not in the baseband but has a \ac{2d} modulated frequency, we can write $\mathcal{B}_{\bm{N},[-\bm{W} + \bm{f}_c, \bm{W} + \bm{f}_c]}$ as
\begin{equation}
    \mathcal{B}_{\bm{N},[-\bm{W} + \bm{f}_c, \bm{W} + \bm{f}_c]} \bm{y} = \bm{E}_{f_c^{[0]}} \bm{B}_{M,W^{[0]}} \bm{E}_{f_c^{[0]}}^{H} \bm{y} \bm{E}_{f_c^{[1]}}^{H} \bm{B}_{N,W^{[1]}} \bm{E}_{f_c^{[1]}}.
\end{equation}
The eigentensors of the operator can be written as
\begin{equation}
    \bm{\Lambda}_{\bm{N},[-\bm{W} + \bm{f}_c, \bm{W} + \bm{f}_c]}^{(p)} = \bm{E}_{f_c^{[0]}}\bm{s}^{(l)}_{M,W^{[0]}} {\bm{E}_{f_c^{[1]}}\bm{s}^{(k)}_{N,W^{[1]}}}^{T},
\end{equation}
whose corresponding eigenvalues are the same as $\lambda_{\bm{N},\bm{W}}^{(p)}$.

\subsection{Eigenvalues for Multi-dimensional MultiBand Signals Cubic Subbands}\label{sub:multiband}
In this section, we consider a multiband-limited case
\begin{equation}
    \mathbb{W} = [\bm{f}_0 \pm \bm{W}_0] \cup [\bm{f}_1 \pm \bm{W}_1] \cup \cdots \cup [\bm{f}_{J-1} \pm \bm{W}_{J-1}].
\end{equation}
We show that with such $\mathbb{W}$, the eigenvalues of the linear operator $\mathcal{B}_{\bm{N},\mathbb{W}}$ also possess a clustering property.

Firstly, we bound the eigenvalues of $\mathcal{B}_{\bm{N},\mathbb{W}}$ roughly with the following lemma.

\begin{lemma}\label{lemma:eigenbound}
    Given any $\mathbb{W} $ and $\bm{N}$, the eigenvalues $\lambda_{\bm{N},\mathbb{W}}^{(p)})$ of $\mathcal{B}_{\bm{N},\mathbb{W}}$ satisfy
    \begin{equation}
        0 < \lambda_{\bm{N},\mathbb{W}}^{(p)} < 1,
    \end{equation}

    and 

    \begin{equation}\label{eq:sumlambda}
    \sum_{p=0}^{MN-1} \lambda_{\bm{N},\mathbb{W}}^{(p)} = \sum_j 4NM W^{[0]}_j W^{[1]}_j = NM\|\mathbb{W}\|.
\end{equation}
\end{lemma}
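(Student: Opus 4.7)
The plan is to treat the two claims separately by exploiting the factorization $\mathcal{B}_{\bm{N},\mathbb{W}} = \mathcal{I}_{\bm{N}}\,\mathcal{B}_{\mathbb{W}}\,\mathcal{I}_{\bm{N}}^{*}$. The operator $\mathcal{I}_{\bm{N}}^{*}$ is the isometric zero-extension, $\mathcal{B}_{\mathbb{W}}$ is the orthogonal projection of $\ell_2(\mathbb{Z}^d)$ onto the subspace of sequences whose DTFT is supported in $\mathbb{W}$, and $\mathcal{I}_{\bm{N}}$ is the adjoint of $\mathcal{I}_{\bm{N}}^{*}$. Consequently $\mathcal{B}_{\bm{N},\mathbb{W}}$ is self-adjoint and positive semidefinite with operator norm at most one, which immediately yields the non-strict bounds $0 \le \lambda_{\bm{N},\mathbb{W}}^{(p)} \le 1$.

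To upgrade both inequalities to strict, I would use the quadratic-form identity $\langle \mathcal{B}_{\bm{N},\mathbb{W}}\bm{y}, \bm{y}\rangle = \|\mathcal{B}_{\mathbb{W}}\mathcal{I}_{\bm{N}}^{*}(\bm{y})\|_2^2$, which follows from adjointness together with $\mathcal{B}_{\mathbb{W}}$ being a self-adjoint idempotent. If an eigen-tensor $\bm{y}\neq 0$ satisfied $\lambda = 0$, then the DTFT $Y(\bm{f})$ of $\mathcal{I}_{\bm{N}}^{*}(\bm{y})$ would vanish on $\mathbb{W}$; if instead $\lambda = 1$, the same identity combined with $\|\mathcal{I}_{\bm{N}}^{*}(\bm{y})\|_2 = \|\bm{y}\|_2$ would force $\mathcal{B}_{\mathbb{W}}\mathcal{I}_{\bm{N}}^{*}(\bm{y}) = \mathcal{I}_{\bm{N}}^{*}(\bm{y})$, so $Y$ would vanish on $[-\tfrac12,\tfrac12]^{d}\setminus\mathbb{W}$. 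In either case $Y$ vanishes on a set of positive Lebesgue measure. But $Y$ is a trigonometric polynomial supported on $\mathcal{N}$, hence a real-analytic function on the torus, so the identity principle forces $Y\equiv 0$ and therefore $\bm{y}=0$, a contradiction.

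The trace identity is a direct computation from the kernel of $\mathcal{B}_{\bm{N},\mathbb{W}}$ in \eqref{eq:opdef}. Setting $\bm{m}=\bm{n}$ gives the diagonal entry $\int_{\mathbb{W}} 1\,d\bm{f} = \|\mathbb{W}\|$, independent of $\bm{m}$. Summing over the $MN$ indices of $\mathcal{N}$ yields
\[
\sum_{p=0}^{MN-1}\lambda_{\bm{N},\mathbb{W}}^{(p)} \;=\; \operatorname{tr}(\mathcal{B}_{\bm{N},\mathbb{W}}) \;=\; MN\,\|\mathbb{W}\|,
\]
and since in the cubic case $\mathbb{W}$ is a disjoint union of $J$ axis-aligned boxes of side lengths $2W^{[0]}_j$ and $2W^{[1]}_j$, one has $\|\mathbb{W}\| = \sum_{j} 4 W^{[0]}_j W^{[1]}_j$, giving \eqref{eq:sumlambda}.

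The main obstacle is the strict form of the first bound: the projection-sandwich argument alone only yields $0 \le \lambda \le 1$. Ruling out the boundary values $\lambda\in\{0,1\}$ requires the analyticity/uniqueness property of the DTFT of a finitely supported sequence, which is the only non-routine ingredient in the proof; everything else reduces to operator-norm estimates and a one-line trace calculation.
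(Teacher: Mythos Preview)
Your proposal is correct and follows essentially the same route as the paper: both compute the quadratic form $\langle \mathcal{B}_{\bm{N},\mathbb{W}}(\bm{y}),\bm{y}\rangle = \int_{\mathbb{W}}|\widehat{\mathcal{I}_{\bm{N}}^{*}\bm{y}}(\bm f)|^{2}\,d\bm f$ to get $0<\lambda<1$, and both obtain the trace identity by a one-line computation (you sum the constant diagonal entries directly, while the paper uses the Kronecker form $\operatorname{tr}(\bm{B}_{N,W^{[1]}_j}\otimes\bm{B}_{M,W^{[0]}_j})$, which amounts to the same thing). Your explicit appeal to real-analyticity of the trigonometric polynomial $Y(\bm f)$ to exclude $\lambda\in\{0,1\}$ is a welcome clarification that the paper leaves implicit.
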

\begin{proof}
    See Appendix~\ref{proof:eigenbound}.
\end{proof}

The distribution of the eigenvalues shows a dramatic transition, as the following theorem shows.
\begin{theorem}\label{thm:sharp}
    Given $\epsilon \in (0,\frac{1}{2})$, the number of the eigenvalues between $\epsilon$ and $1-\epsilon$ is bounded by
    \begin{equation}
        \#\{p:\epsilon \le \lambda_{\bm{N},\mathbb{W}}^{(p)}\le 1-\epsilon\} = O(\frac{JM\log(N) + JN\log(M)}{\epsilon(1-\epsilon)}).
    \end{equation}
\end{theorem}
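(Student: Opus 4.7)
The plan is to apply a polynomial majorization trick. By Lemma~\ref{lemma:eigenbound} each $\lambda_{\bm{N},\mathbb{W}}^{(p)}$ lies in $[0,1]$, and the quadratic $x(1-x)$ satisfies $x(1-x) \ge \epsilon(1-\epsilon)$ throughout $[\epsilon, 1-\epsilon]$ while remaining nonnegative on $[0,1]$. Hence
\begin{equation*}
\#\{p : \epsilon \le \lambda_{\bm{N},\mathbb{W}}^{(p)} \le 1-\epsilon\} \le \frac{1}{\epsilon(1-\epsilon)}\sum_{p=0}^{MN-1}\lambda_{\bm{N},\mathbb{W}}^{(p)}\bigl(1-\lambda_{\bm{N},\mathbb{W}}^{(p)}\bigr) = \frac{\mathrm{tr}(\mathcal{B}_{\bm{N},\mathbb{W}}) - \mathrm{tr}(\mathcal{B}_{\bm{N},\mathbb{W}}^2)}{\epsilon(1-\epsilon)}.
\end{equation*}
The first trace is $MN\|\mathbb{W}\|$ by \eqref{eq:sumlambda}, so everything reduces to showing $\mathrm{tr}(\mathcal{B}_{\bm{N},\mathbb{W}}) - \mathrm{tr}(\mathcal{B}_{\bm{N},\mathbb{W}}^2) = O(JM\log N + JN\log M)$.

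To decouple the $J$ subbands I would identify $\mathcal{B}_{\bm{N},\mathbb{W}}$ with its matrix representation $\bm{A} = \sum_{i=0}^{J-1}\bm{A}_i$, where $\bm{A}_i = \bm{B}_{N,\tilde{W}^{[1]}_i} \otimes \bm{B}_{M,\tilde{W}^{[0]}_i}$ as in \eqref{eq:2dkrone} (the diagonal frequency modulations are unitary and drop out of every trace). Each $\bm{A}_i$ is a Kronecker product of positive semidefinite matrices and hence itself positive semidefinite, so $\mathrm{tr}(\bm{A}_i\bm{A}_j) = \mathrm{tr}(\bm{A}_j^{1/2}\bm{A}_i\bm{A}_j^{1/2}) \ge 0$ for all $i,j$. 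Expanding $\mathrm{tr}(\bm{A}^2) = \sum_{i,j}\mathrm{tr}(\bm{A}_i\bm{A}_j) \ge \sum_i \mathrm{tr}(\bm{A}_i^2)$ therefore yields
\begin{equation*}
\mathrm{tr}(\bm{A}) - \mathrm{tr}(\bm{A}^2) \le \sum_{i=0}^{J-1}\bigl[\mathrm{tr}(\bm{A}_i)-\mathrm{tr}(\bm{A}_i^2)\bigr].
\end{equation*}
Using $\mathrm{tr}(X \otimes Y) = \mathrm{tr}(X)\mathrm{tr}(Y)$ and writing $a_i = \mathrm{tr}(\bm{B}_{N,\tilde{W}^{[1]}_i}) = 2NW_i^{[1]}$, $b_i = \mathrm{tr}(\bm{B}_{M,\tilde{W}^{[0]}_i}) = 2MW_i^{[0]}$, $a_i' = \mathrm{tr}(\bm{B}_{N,\tilde{W}^{[1]}_i}^2)$, $b_i' = \mathrm{tr}(\bm{B}_{M,\tilde{W}^{[0]}_i}^2)$, the $i$-th bracket becomes $a_ib_i - a_i'b_i' \le (a_i - a_i')b_i + a_i(b_i - b_i')$.

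Finally, each 1-D gap
\begin{equation*}
a_i - a_i' = \sum_{l}\lambda_{N,\tilde{W}^{[1]}_i}^{(l)}\bigl(1-\lambda_{N,\tilde{W}^{[1]}_i}^{(l)}\bigr)
\end{equation*}
is the classical DPSS transition width, known to be $O(\log N)$ by the Landau--Widom asymptotics (equivalently, only $O(\log N)$ 1-D prolate eigenvalues lie in any fixed subinterval of $(0,1)$). Symmetrically $b_i - b_i' = O(\log M)$. Plugging these into the display above gives $a_ib_i - a_i'b_i' = O(M\log N + N\log M)$ per subband; summing over $i$ and dividing by $\epsilon(1-\epsilon)$ delivers the theorem. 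The main obstacle is that Lemma~\ref{lemma:1Dcluster} as quoted states only qualitative clustering, not the sharper logarithmic count of 1-D transition eigenvalues; I would either cite Landau--Widom directly or supply a short self-contained argument by computing $\mathrm{tr}(\bm{B}_{N,W}) - \mathrm{tr}(\bm{B}_{N,W}^2) = 2NW - \sum_{m,n}|\bm{B}_{N,W}[m,n]|^2$ and bounding the off-diagonal sum via the standard estimate $\sum_{k=1}^{N-1} k^{-1}\sin^2(2\pi Wk) = O(\log N)$.
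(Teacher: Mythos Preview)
Your proposal is correct and mirrors the paper's argument almost exactly: both reduce to bounding $\sum_p \lambda_p(1-\lambda_p)=\mathrm{tr}(\bm{A})-\mathrm{tr}(\bm{A}^2)$, drop the nonnegative cross terms $\mathrm{tr}(\bm{A}_i\bm{A}_j)$ to decouple the subbands, and then use the Kronecker factorization together with the 1-D fact $\mathrm{tr}(\bm{B}_{N,W})-\|\bm{B}_{N,W}\|_F^2=O(\log N)$. The only cosmetic difference is that the paper writes out the explicit 1-D lower bound $\|\bm{B}_{N,W}\|_F^2\ge 2NW-\tfrac{2}{\pi^2}\tfrac{2N-1}{N-1}-\tfrac{2}{\pi^2}\log(N-1)$ in place of your Landau--Widom citation, which is precisely the ``short self-contained argument'' you sketch at the end.
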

\begin{proof}
    See Appendix~\ref{proof:sharp}.
\end{proof}

This theorem along with Lemma~\ref{lemma:eigenbound} is almost sufficient to show that about $NM\|\mathbb{W}\|$ eigenvalues are close to 1 while others are close to 0. This indicates the effective dimensionality is about $NM\|\mathbb{W}\|$ for multiband signal. Below, we give a more delicate analysis of the distribution of the eigenvalues in \ac{2d} cases.


\begin{theorem}\label{thm:cluster0}
Let $\mathbb{W} \subset [-\frac{1}{2},\frac{1}{2}]^2$ be a fixed union of $J$ disjoint intervals. Given an $\epsilon \in (0,\frac{1}{\|\mathbb{W}\|}-1)$, there exist constants $\tilde{C}_3$ and $\tilde{C}_4$ satisfying
\begin{equation}
     \lambda^{(p)}_{\bm{N},\mathbb{W}} \le\tilde{C}_3(\mathbb{W},\epsilon)e^{-\tilde{C}_4 (\mathbb{W},\epsilon) \min{(M,N)}},
\end{equation}
for all $p \ge \lceil MN\|\mathbb{W}\|(1+\epsilon) \rceil$, $N \ge \tilde{N}_2(\mathbb{W},\epsilon)$ and $M \ge \tilde{M}_2(\mathbb{W},\epsilon)$.  
\end{theorem}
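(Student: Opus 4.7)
The plan is to exploit additivity of the band-limiting operator across disjoint cubic subbands. Because the bands $\mathbb{W}_j = [\bm{f}_j \pm \bm{W}_j]$ are pairwise disjoint, the integral that defines $\mathcal{B}_{\bm{N},\mathbb{W}}$ in \eqref{eq:opdef} splits, giving
\[
\mathcal{B}_{\bm{N},\mathbb{W}} \;=\; \sum_{j=0}^{J-1} \mathcal{B}_{\bm{N},\mathbb{W}_j},
\]
where each summand is Hermitian and positive semidefinite. By the observation at the end of Section~\ref{sec:multibandlimited}, the eigenvalues of $\mathcal{B}_{\bm{N},\mathbb{W}_j}$ coincide with those of the baseband operator $\mathcal{B}_{\bm{N},\bm{W}_j}$, so Theorem~\ref{thm:2Dcluster}(2) applies subband by subband.

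Next, I would invoke the Weyl/Ky~Fan inequality for sums of Hermitian operators,
\[
\lambda_{i_1+\cdots+i_J-J+1}\!\left(\sum_{j} A_j\right) \;\le\; \sum_{j=0}^{J-1}\lambda_{i_j}(A_j),
\]
with eigenvalues ordered decreasingly. Setting $i_j = \lceil 4MN W_j^{[0]} W_j^{[1]}(1+\epsilon')\rceil + 1$ for some $\epsilon' \in (0,\epsilon)$ chosen below, the right-hand side is bounded, via Theorem~\ref{thm:2Dcluster}(2), by $\sum_j \bar{C}_3(\bm{W}_j,\epsilon')\,e^{-\bar{C}_4(\bm{W}_j,\epsilon')\min(M,N)}$. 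Using $\|\mathbb{W}\| = \sum_j 4 W_j^{[0]} W_j^{[1]}$, the left-hand index satisfies
\[
\sum_j i_j - J + 1 \;\le\; MN\|\mathbb{W}\|(1+\epsilon') + J + 1.
\]
Taking $\tilde{C}_3 = J\max_j \bar{C}_3(\bm{W}_j,\epsilon')$ and $\tilde{C}_4 = \min_j \bar{C}_4(\bm{W}_j,\epsilon')$ then collapses the sum of $J$ exponentials into a single bound of the claimed form.

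The main obstacle is aligning the index threshold with $\lceil MN\|\mathbb{W}\|(1+\epsilon)\rceil$. The ceilings and the $-J+1$ offset from Weyl introduce an $O(J)$ additive loss, which I would absorb by shrinking $\epsilon$ to some $\epsilon' < \epsilon$ (e.g.\ $\epsilon' = \epsilon/2$). This requires $MN\|\mathbb{W}\|(\epsilon-\epsilon') \ge J+1$, which is precisely what determines $\tilde{M}_2(\mathbb{W},\epsilon)$ and $\tilde{N}_2(\mathbb{W},\epsilon)$, both of order $\Theta(J/(\|\mathbb{W}\|\epsilon))$. The admissibility constraint $\epsilon' < 1/(4W_j^{[0]} W_j^{[1]}) - 1$ required by Theorem~\ref{thm:2Dcluster}(2) for every $j$ is automatic, since the hypothesis $\epsilon < 1/\|\mathbb{W}\| - 1$ together with $\|\mathbb{W}\| \ge 4W_j^{[0]} W_j^{[1]}$ yields it directly. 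Once these bookkeeping details are handled, the remaining calculation is routine, and the exponential rate $\tilde{C}_4$ inherits from the slowest per-subband rate.
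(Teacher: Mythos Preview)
Your proposal is correct and follows essentially the same route as the paper: decompose $\mathcal{B}_{\bm{N},\mathbb{W}}$ as a sum of the per-subband operators, apply the Weyl/Ky~Fan eigenvalue inequality for sums of Hermitian matrices, and then invoke Theorem~\ref{thm:2Dcluster}(2) on each summand, with the same constants $\tilde{C}_3 = J\max_j \bar{C}_3(\bm{W}_j,\cdot)$ and $\tilde{C}_4 = \min_j \bar{C}_4(\bm{W}_j,\cdot)$. Your handling of the ceiling offsets via an auxiliary $\epsilon' < \epsilon$ is in fact more careful than the paper's own proof, which uses the same $\epsilon$ throughout and tacitly absorbs the $O(J)$ discrepancy between $\sum_i \lceil MN\|\bm{W}_i\|(1+\epsilon)\rceil$ and $\lceil MN\|\mathbb{W}\|(1+\epsilon)\rceil$.
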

\begin{proof}
    See Appendix~\ref{proof:cluster0}.
\end{proof}

\begin{theorem}\label{thm:cluster1}
Let $\mathbb{W} \in [-\frac{1}{2},\frac{1}{2}]^2$ be a fixed union of $J$ disjoint intervals. Given a $\epsilon \in (0,1)$, there exist constants $C_5(\mathbb{W},\epsilon)$
\begin{equation}
\begin{aligned}
     \lambda^{(p)}_{\bm{N},\mathbb{W}} &\ge 1 - 6M^2 N^2 C_5(\mathbb{W},\epsilon)e^{-\tilde{C}_2(\mathbb{W},\epsilon),\min{M,N}},
\end{aligned}
\end{equation}
for all $p \le J-1+\lfloor MN\|\mathbb{W}\|(1-\epsilon)\rfloor$.
\end{theorem}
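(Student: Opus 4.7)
The plan is to exhibit an explicit subspace of dimension at least $J+\lfloor MN\|\mathbb{W}\|(1-\epsilon)\rfloor$ on which the quadratic form induced by $\mathcal{B}_{\bm{N},\mathbb{W}}$ is exponentially close to the identity, and then to invoke the Courant--Fischer min--max characterization to conclude that every eigenvalue indexed by $p\le J-1+\lfloor MN\|\mathbb{W}\|(1-\epsilon)\rfloor$ obeys the claimed lower bound. This is the natural two-dimensional analogue of the argument used to prove Theorem~\ref{lemma:1Dmulticluster} in \cite{zhu2017approximating}, with scalar DPSS vectors replaced by outer products of modulated DPSS vectors.

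First I would construct the subspace as follows. For each subband $i$, set $K_i^{[0]}=\lfloor 2MW_i^{[0]}(1-\epsilon')\rfloor$ and $K_i^{[1]}=\lfloor 2NW_i^{[1]}(1-\epsilon')\rfloor$ for a suitably chosen $\epsilon'\in(0,1)$ depending on $\epsilon$, and collect the atoms $\bm{E}_{f_i^{[0]}}\bm{s}_{M,W_i^{[0]}}^{(l)}(\bm{E}_{f_i^{[1]}}\bm{s}_{N,W_i^{[1]}}^{(k)})^{T}$ for $l<K_i^{[0]}$ and $k<K_i^{[1]}$. By Theorem~\ref{thm:2Dcluster}, together with the modulation identity already derived in Section~\ref{sec:multibandlimited}, each such atom is an eigen-tensor of the single-band operator $\mathcal{B}_{\bm{N},\tilde{\mathbb{W}}_i}$ whose eigenvalue is exponentially close to $1$ in $\min(M,N)$. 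I would pick $\epsilon'$ so that $\sum_i K_i^{[0]}K_i^{[1]}$ meets or exceeds $J+\lfloor MN\|\mathbb{W}\|(1-\epsilon)\rfloor$; the additive $J-1$ in the statement is exactly the slack absorbed by passing from a sum of floors to the floor of a sum.

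Next, I would exploit disjointness of the subbands to write $\mathcal{B}_{\bm{N},\mathbb{W}}=\sum_i\mathcal{B}_{\bm{N},\tilde{\mathbb{W}}_i}$. For a unit tensor $\bm{v}$ in the subspace spanned by the atoms above, the within-band contributions $\langle \bm{v},\mathcal{B}_{\bm{N},\tilde{\mathbb{W}}_i}\bm{v}\rangle$ are controlled from below by Theorem~\ref{thm:2Dcluster}. The remaining work is to bound two kinds of cross-band terms: Gram-type overlaps $\langle \bm{\Lambda}_i^{(l,k)},\bm{\Lambda}_{i'}^{(l',k')}\rangle$ for $i\ne i'$, and action terms $\langle \bm{\Lambda}_i^{(l,k)},\mathcal{B}_{\bm{N},\tilde{\mathbb{W}}_j}\bm{\Lambda}_{i'}^{(l',k')}\rangle$ whenever $j$ differs from the band in which an atom is concentrated. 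Both are exponentially small because an atom concentrated in band $i$ is essentially annihilated by the projector onto a disjoint band; this is exactly the content of Theorem~\ref{thm:cluster0} (or of part~2 of Theorem~\ref{thm:2Dcluster}) applied to the ``wrong'' band, which is also why the same decay constant $\tilde{C}_2(\mathbb{W},\epsilon)$ reappears in the present theorem.

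The main obstacle will be converting the pointwise, exponentially small cross-term estimates into an operator-norm bound on the deviation of $\mathcal{B}_{\bm{N},\mathbb{W}}$, restricted to the chosen subspace, from the identity; this is what produces the polynomial prefactor $6M^2N^2$ in the statement. Since the subspace has dimension at most $MN$, bounding an operator norm by the dimension times the maximum entry costs a factor of $MN$, and this cost is incurred once for the Gram correction that orthonormalizes the atoms and a second time when propagating that correction through the action of $\mathcal{B}_{\bm{N},\mathbb{W}}$, yielding the $M^2N^2$. Once this operator-norm bound is in hand, a direct application of Weyl's inequality to the quadratic form restricted to the subspace delivers the stated eigenvalue lower bound, provided $M$ and $N$ exceed thresholds chosen large enough for the exponential factor to dominate the polynomial prefactor.
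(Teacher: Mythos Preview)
Your overall plan is correct and coincides with the paper's proof: build the subspace from modulated DPSS outer products, show the restriction of $\mathcal{B}_{\bm{N},\mathbb{W}}$ to this subspace is exponentially close to the identity, and pass to individual eigenvalues by a variational argument. Two points deserve correction or comment.

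First, your mechanism for bounding the cross-band terms is misattributed. Neither Theorem~\ref{thm:cluster0} nor part~2 of Theorem~\ref{thm:2Dcluster} says that an atom concentrated in band $i$ is nearly annihilated by $\mathcal{B}_{\bm{N},\tilde{\mathbb W}_j}$ for $j\ne i$; those results only control tail eigenvalues of the respective operators, and the atom in question lies in the \emph{top} eigenspace of band~$i$'s operator, about which band~$j$'s spectrum says nothing directly. The paper instead uses that the infinite-index band-limiting operators $\mathcal{B}_{\tilde{\mathbb W}_i}$ are orthogonal projections onto mutually orthogonal subspaces, so for any unit atom $\bm\Psi^{(i,j)}$ one has $\|\mathcal{B}_{\bm{N},\mathbb W\setminus\tilde{\mathbb W}_i}\bm\Psi^{(i,j)}\|_F^2\le 1-\|\mathcal{B}_{\tilde{\mathbb W}_i}\mathcal{I}^*_{\bm N}\bm\Psi^{(i,j)}\|_F^2\le 1-(\lambda^{(j)}_{\bm N,\bm W_i})^2$, which is what produces the decay constant $\tilde C_2$ (defined as $\min_i\bar C_2(\bm W_i,\epsilon)$, coming from part~1 of Theorem~\ref{thm:2Dcluster}, not part~2). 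For the Gram overlaps $\langle\bm\Psi^{(i,j)},\bm\Psi^{(i',j')}\rangle$ with $i\ne i'$ the paper invokes a separate correlation lemma from \cite{zhu2017approximating}, $|\langle\bm s_1,\bm s_2\rangle|\le 3\sqrt{1-\min(\lambda_1,\lambda_2)}$, which you should also cite.

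Second, the final variational step differs slightly. You propose a direct Courant--Fischer/Weyl bound on the minimum Rayleigh quotient over the subspace. The paper instead bounds the \emph{sum} of the top $K$ eigenvalues via $\sum_{p<K}\lambda^{(p)}_{\bm N,\mathbb W}\ge \operatorname{trace}(\bm\Psi^H\bm B_{\bm N,\mathbb W}\bm\Psi)/\|\bm\Psi^H\bm\Psi\|_2$ (with $\|\bm\Psi^H\bm\Psi\|_2$ controlled by Gershgorin, contributing one factor of $MN$), and then extracts an individual bound by subtracting the other $K-1$ eigenvalues, each at most $1$. Both routes work and yield the same $M^2N^2$ prefactor; the paper's trace argument is marginally simpler because it avoids uniformly controlling the Rayleigh quotient over all unit vectors in a non-orthonormal frame.
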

\begin{proof}
    See Appendix~\ref{ap:cluster1}.
\end{proof}

The above 2 theorems show that only about $MN\|\mathbb{W}\|$ eigen-tensors are effective in constructing the multi-dimensional multiband signals. Therefore, one can use a low-dimensional dictionary formed by these eigen-tensors to approximate the original signal with low MSE.

\subsection{Approximation of Multiband Signal with Cubic Subbands}\label{sub:appro}
Up to now, we have shown that the eigenvalues of the operator $\mathcal{B}_{\bm{N},\mathbb{W}}$ have a cluster property. Therefore, one can use a dictionary with about $MN\|\mathbb{W}\|$ eigen-tensors of  $\mathcal{B}_{\bm{N},\mathbb{W}}$ to represent the original multi-dimensional multiband signal. However, calculating the eigen-tensors of $\mathcal{B}_{\bm{N},\mathbb{W}}$ is time consuming. Below, we show that the dictionary can be approximated by a dictionary $\bm{\Psi}$ formed by modulated DPSS.

Let $p \in {1,2,\dots,MN}$, $q_i \in {1,2,\dots,N}$ and $i \in {J}$. Define 3-dimensional tensors $\bm{\Phi}$ and $\bm{\Psi}$ as 
\begin{align}
    \bm{\Phi} &:= \left[ [\bm{\Phi}^{(0)}],[\bm{\Phi}^{(1)}],\dots,[\bm{\Phi}^{(p-1)}]\right], \label{eq:phi}\\
    \bm{\Psi} &:= \left[ [\bm{\Psi}^{(0,0)}],[\bm{\Psi}^{(0,1)}],\dots,[\bm{\Psi}^{(0,q_0)}],\dots [\bm{\Psi}^{(J-1,q_{J-1})}]\right]. \label{eq:psi}
\end{align}
where $\bm{\Phi}^{(k)}$ is the $k$-th eigen-tensor of $\mathcal{B}_{\bm{N},\mathbb{W}}$ and $\bm{\Psi}^{(i,j)}$ is the $j$-th eigen-tensor of $\mathcal{B}_{\bm{N},\bm{W}_i}$.

Following \cite{zhu2017approximating}, denoted by $\mathcal{S}_{\bm{\Phi}}$ and $\mathcal{S}_{\bm{\Psi}}$ the subspace spanned by $\bm{\Phi}$ and $\bm{\Psi}$, the similarity of the dictionaries can be measured by the angle between the subspace as
\begin{equation}
    \cos(\Theta_{\mathcal{S}_{\bm{\Phi}} \mathcal{S}_{\bm{\Psi}} }) := \left\{ 
    \begin{aligned}
        &\inf_{\bm{\phi} \in \mathcal{S}_{\bm{\Phi}}, \|\bm{\phi}\|_F = 1} \|\mathcal{P}_{\bm{\Psi}} \bm{\phi} \|_F,\ \   p \le \sum_{i \in [J]} q_i,\\
        &\inf_{\bm{\psi} \in \mathcal{S}_{\bm{\Psi}}, \|\bm{\psi}\|_F = 1} \|\mathcal{P}_{\bm{\Phi}} \bm{\psi} \|_F,\ \   p \ge \sum_{i \in [J]} q_i.
    \end{aligned}
    \right.
\end{equation}
Here $\mathcal{P}$ denotes the orthogonal projection operator.
The following two theorems show that $\cos(\Theta_{\mathcal{S}_{\bm{\Phi}} \mathcal{S}_{\bm{\Psi}} })$ approximates to 1, and thus these two dictionaries are similar.

\begin{theorem}\label{thm:pro1}
    Fix $\epsilon \in (0,\min(1,\frac{1}{\|\mathbb{W}\|}-1))$. Let $p = \sum_i \lceil MN \|\bm{W}_i\| (1+\epsilon)\rceil$ and $q_i \le \lfloor MN \|\bm{W}_i\| (1-\epsilon)\rfloor$. The $p\times M \times N$ tensor $\bm{\Phi}$ and the $(\sum_i q_i)\times M \times N$ tensor $\bm{\Psi}$ are defined as \eqref{eq:phi} and \eqref{eq:psi}. Then for any matrix $\bm{\Psi}^{(i,j)} \in {\bm{\Psi}}$,
    \begin{equation}
    \begin{small}
    \begin{aligned}
        &\|\bm{\Psi}^{(i,j)} -  \mathcal{P}_{\bm{\Phi}} \bm{\Psi}^{(i,j)}\|_F^2 \le 2\tilde{C}_1(\mathbb{W},\epsilon) e^{-\tilde{C}_2(\mathbb{W},\epsilon) \min(M,N)} /\\
        &\left(1 - \tilde{C}_1(\mathbb{W},\epsilon) e^{-\tilde{C}_2(\mathbb{W},\epsilon) \min(M,N)} 
          -  \tilde{C}_3(\mathbb{W},\epsilon) e^{-\tilde{C}_4(\mathbb{W},\epsilon) \min(M,N)}  \right)^2\\ 
        &:= \chi_1,
    \end{aligned}
    \end{small}
    \end{equation}
    and
    \begin{equation}
    \begin{small}
    \begin{aligned}
       &\cos(\Theta_{\mathcal{S}_{\bm{\Phi}} \mathcal{S}_{\bm{\Psi}} }) \ge\\
       &\sqrt{1 - MN \sqrt{\chi_1} - 3MN\sqrt{\tilde{C}_1(\mathbb{W},\epsilon)} e^{-\tilde{C}_2(\mathbb{W},\epsilon) \min(M,N)/2}}/\\
    &\sqrt{1+3MN\sqrt{\tilde{C}_1(\mathbb{W},\epsilon)} e^{-\tilde{C}_2(\mathbb{W},\epsilon) \min(M,N)/2}}.\\
    \end{aligned}
    \end{small}
    \end{equation}
    
\end{theorem}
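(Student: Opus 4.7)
The plan is to treat each atom $\bm{\Psi}^{(i,j)}$ as an approximate eigen-tensor of the full multiband operator $\mathcal{B}_{\bm{N},\mathbb{W}}=\sum_{k}\mathcal{B}_{\bm{N},\bm{W}_k}$ with approximate eigenvalue $1$, and then exploit the spectral gap given by Theorem~\ref{thm:cluster0} to force the tail of its expansion in the eigenbasis of $\mathcal{B}_{\bm{N},\mathbb{W}}$ to be exponentially small. Since $j<q_i\le\lfloor MN\|\bm{W}_i\|(1-\epsilon)\rfloor$, the modulated form of Theorem~\ref{thm:2Dcluster} gives $1-\lambda_{\bm{N},\bm{W}_i}^{(j)}\le\tilde{C}_1 e^{-\tilde{C}_2\min(M,N)}$, while $p\ge\lceil MN\|\mathbb{W}\|(1+\epsilon)\rceil$ together with Theorem~\ref{thm:cluster0} gives $\lambda_{\bm{N},\mathbb{W}}^{(k)}\le\tilde{C}_3 e^{-\tilde{C}_4\min(M,N)}$ for every $k\ge p$.

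For the first bound I would expand $\bm{\Psi}^{(i,j)}=\sum_{k}c_k\bm{\Phi}^{(k)}$ so that $\|\bm{\Psi}^{(i,j)}-\mathcal{P}_{\bm{\Phi}}\bm{\Psi}^{(i,j)}\|_F^2=\sum_{k\ge p}|c_k|^2$ and use the Parseval-type inequality
\begin{equation*}
\sum_{k\ge p}|c_k|^2\le\bigl(1-\tilde{C}_3 e^{-\tilde{C}_4\min(M,N)}\bigr)^{-2}\bigl\|\mathcal{B}_{\bm{N},\mathbb{W}}\bm{\Psi}^{(i,j)}-\bm{\Psi}^{(i,j)}\bigr\|_F^2,
\end{equation*}
which comes from $\|\mathcal{B}_{\bm{N},\mathbb{W}}\bm{\Psi}^{(i,j)}-\bm{\Psi}^{(i,j)}\|_F^2=\sum_k|c_k|^2(1-\lambda_{\bm{N},\mathbb{W}}^{(k)})^2$ and the tail gap. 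To bound the residual I would split $\mathcal{B}_{\bm{N},\mathbb{W}}\bm{\Psi}^{(i,j)}=\lambda_{\bm{N},\bm{W}_i}^{(j)}\bm{\Psi}^{(i,j)}+\sum_{k\ne i}\mathcal{B}_{\bm{N},\bm{W}_k}\bm{\Psi}^{(i,j)}$: the correct-band piece contributes $1-\lambda_{\bm{N},\bm{W}_i}^{(j)}$, and each wrong-band piece is handled via the identity $\|\mathcal{I}_{\bm{N}}^{*}\bm{\Psi}^{(i,j)}\|^2-\|\mathcal{B}_{\bm{W}_i}\mathcal{I}_{\bm{N}}^{*}\bm{\Psi}^{(i,j)}\|^2=1-\lambda_{\bm{N},\bm{W}_i}^{(j)}$ together with the orthogonality $\mathcal{B}_{\bm{W}_k}\mathcal{B}_{\bm{W}_i}=0$ of disjoint bandpass projectors, yielding $\|\mathcal{B}_{\bm{N},\bm{W}_k}\bm{\Psi}^{(i,j)}\|_F^2\le 1-\lambda_{\bm{N},\bm{W}_i}^{(j)}$. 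Assembling the triangle inequality and folding the leading $\tilde{C}_1 e^{-\tilde{C}_2\min(M,N)}$ term into the denominator produces the stated $\chi_1$.

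For the subspace-angle bound, since $p\ge\sum_i q_i$ the relevant quantity is $\cos(\Theta)=\inf_{\bm{\psi}\in\mathcal{S}_{\bm{\Psi}},\|\bm{\psi}\|_F=1}\|\mathcal{P}_{\bm{\Phi}}\bm{\psi}\|_F$. Writing $\bm{\psi}=\sum_{i,j}\alpha_{ij}\bm{\Psi}^{(i,j)}$, I would combine two ingredients: $\|(\bm{I}-\mathcal{P}_{\bm{\Phi}})\bm{\psi}\|_F\le\|\bm{\alpha}\|_1\sqrt{\chi_1}$ from the triangle inequality and the first part, and $\|\bm{\psi}\|_F^2=\bm{\alpha}^{H}\bm{G}\bm{\alpha}$ where $\bm{G}=\bm{I}+\bm{E}$ has only cross-band off-diagonals (within each band the atoms are orthonormal). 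The cross-band inner products $\langle\bm{\Psi}^{(i,j)},\bm{\Psi}^{(i',j')}\rangle_F$ are controlled by the same approximate band-localization argument, giving $|E_{rs}|\le\sqrt{\tilde{C}_1}e^{-\tilde{C}_2\min(M,N)/2}$; a Gershgorin or dimension-based operator-norm bound then gives $\|\bm{E}\|\le 3MN\sqrt{\tilde{C}_1}e^{-\tilde{C}_2\min(M,N)/2}$ and hence $\|\bm{\alpha}\|_2^2\le\|\bm{\psi}\|_F^2/(1-\|\bm{E}\|)$. Plugging into $\|\mathcal{P}_{\bm{\Phi}}\bm{\psi}\|_F^2=\|\bm{\psi}\|_F^2-\|(\bm{I}-\mathcal{P}_{\bm{\Phi}})\bm{\psi}\|_F^2$ and using $\|\bm{\alpha}\|_1\le\sqrt{MN}\|\bm{\alpha}\|_2$ yields the stated ratio.

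The hard part will be the quantitative cross-band control: both the residual $\sum_{k\ne i}\mathcal{B}_{\bm{N},\bm{W}_k}\bm{\Psi}^{(i,j)}$ in the first bound and the off-diagonal Gram entries $\langle\bm{\Psi}^{(i,j)},\bm{\Psi}^{(i',j')}\rangle_F$ in the second both reduce to showing that $\mathcal{I}_{\bm{N}}^{*}\bm{\Psi}^{(i,j)}$ is $(1-\lambda_{\bm{N},\bm{W}_i}^{(j)})$-concentrated in $\bm{W}_i$; in the 2-D setting this must be tracked through the Kronecker structure of $\mathcal{B}_{\bm{N},\bm{W}_k}$, and the polynomial $MN$ and $J$ prefactors must be propagated carefully through every triangle-inequality step so that the final bound does not degrade as the dimension grows.
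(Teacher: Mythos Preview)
Your proposal is correct and follows essentially the same strategy as the paper: treat each $\bm{\Psi}^{(i,j)}$ as a pseudo-eigentensor of $\mathcal{B}_{\bm{N},\mathbb{W}}$ via the decomposition $\mathcal{B}_{\bm{N},\mathbb{W}}\bm{\Psi}^{(i,j)}=\lambda_{\bm{N},\bm{W}_i}^{(j)}\bm{\Psi}^{(i,j)}+\bm{o}^{(i,j)}$, bound $\|\bm{o}^{(i,j)}\|_F^2\le 1-(\lambda_{\bm{N},\bm{W}_i}^{(j)})^2$ using disjoint-band orthogonality, and then use the spectral gap from Theorem~\ref{thm:cluster0} to kill the tail; for the angle bound, control the cross-band Gram entries by the same concentration estimate and apply Gershgorin.

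Two small bookkeeping differences are worth noting. First, the paper obtains the \emph{exact} form of $\chi_1$ by working coefficient-wise: from $\lambda_{\bm{N},\mathbb{W}}^{(k)}c_k=\lambda_{\bm{N},\bm{W}_i}^{(j)}c_k+\langle\bm{\Phi}^{(k)},\bm{o}^{(i,j)}\rangle$ one gets $|c_k|^2=|\langle\bm{\Phi}^{(k)},\bm{o}^{(i,j)}\rangle|^2/|\lambda_{\bm{N},\bm{W}_i}^{(j)}-\lambda_{\bm{N},\mathbb{W}}^{(k)}|^2$, so the denominator is $(1-\tilde{C}_1e^{-\tilde{C}_2\min(M,N)}-\tilde{C}_3e^{-\tilde{C}_4\min(M,N)})^2$ directly and the numerator is exactly $\|\bm{o}^{(i,j)}\|_F^2$. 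Your global Parseval route with $(1-\lambda_{\bm{N},\mathbb{W}}^{(k)})^2$ in the denominator and $\|\mathcal{B}_{\bm{N},\mathbb{W}}\bm{\Psi}^{(i,j)}-\bm{\Psi}^{(i,j)}\|_F^2$ in the numerator gives a bound of the same order, but the ``folding'' you describe does not literally recover the stated $\chi_1$. Second, for the angle the paper lower-bounds $\|\mathcal{P}_{\bm{\Phi}}\bm{\psi}\|_F^2$ by expanding the cross-terms $\langle\mathcal{P}_{\bm{\Phi}}\bm{\Psi}^{(i,j)},\mathcal{P}_{\bm{\Phi}}\bm{\Psi}^{(u,v)}\rangle=\langle\bm{\Psi}^{(i,j)},\bm{\Psi}^{(u,v)}\rangle-\langle\bm{\Psi}^{(i,j)},(\bm{I}-\mathcal{P}_{\bm{\Phi}})\bm{\Psi}^{(u,v)}\rangle$, which is where the $MN\sqrt{\chi_1}$ term in the statement comes from; your triangle-inequality route on $\|(\bm{I}-\mathcal{P}_{\bm{\Phi}})\bm{\psi}\|_F$ would instead produce an $MN\chi_1$ term---actually sharper, but not the displayed ratio.
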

\begin{proof}
    See Appendix~\ref{proof:pro1}
\end{proof}

\begin{theorem}\label{thm:pro2}
    Fix $\epsilon \in (0,\min(1,\frac{1}{\|\mathbb{W}\|}-1))$. Let $p = \sum_i \lfloor MN \|\bm{W}_i\| (1-\epsilon)\rfloor$ and $q_i \ge \lceil MN \|\bm{W}_i\| (1+\epsilon)\rceil$. The $p\times M \times N$ tensor $\bm{\Phi}$ and the $(\sum_i q_i)\times M \times N$ tensor $\bm{\Psi}$ are defined as \eqref{eq:phi} and \eqref{eq:psi}. Then for any matrix $\bm{\Phi}^{(i)} \in {\bm{\Phi}}$,
    \begin{equation}
    \begin{aligned}
        \|\bm{\Phi}^{(i)} -  \mathcal{P}_{\bm{\Psi}} \bm{\Phi}^{(i)}\|_F^2 &\le 12M^2 N^2 C_5(\mathbb{W},\epsilon)e^{-\tilde{C}_2(\mathbb{W},\epsilon) \min{(M,N)}} \\
        &\ \ +2MN\tilde{C}_3(\mathbb{W},\epsilon)e^{-\tilde{C}_4(\mathbb{W},\epsilon)\min{(M,N)}}\\
        &:= \chi_2,
    \end{aligned}
    \end{equation}
    and
    \begin{equation}
    \begin{small}
    \begin{aligned}
       &\cos(\Theta_{\mathcal{S}_{\bm{\Phi}} \mathcal{S}_{\bm{\Psi}} }) \ge \sqrt{1 - MN\chi_2}
    \end{aligned}
    \end{small}
    \end{equation}
    
\end{theorem}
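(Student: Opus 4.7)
The argument mirrors that of Theorem~\ref{thm:pro1} but with the roles of $\bm{\Phi}$ and $\bm{\Psi}$ exchanged. Because each eigen-tensor $\bm{\Phi}^{(k)}$ has unit Frobenius norm and $\mathcal{P}_{\bm{\Psi}}$ is an orthogonal projection, the Pythagorean identity gives
\[
\|\bm{\Phi}^{(k)} - \mathcal{P}_{\bm{\Psi}}\bm{\Phi}^{(k)}\|_F^2 = 1 - \|\mathcal{P}_{\bm{\Psi}}\bm{\Phi}^{(k)}\|_F^2,
\]
so the task reduces to lower-bounding the captured energy $\|\mathcal{P}_{\bm{\Psi}}\bm{\Phi}^{(k)}\|_F^2$.

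Since the subbands are disjoint, $\mathcal{B}_{\bm{N},\mathbb{W}} = \sum_{i=0}^{J-1}\mathcal{B}_{\bm{N},\tilde{\bm{W}}_i}$. Taking the quadratic form at $\bm{\Phi}^{(k)}$ and diagonalising each summand in its DPSS-product eigenbasis yields
\[
\lambda_{\bm{N},\mathbb{W}}^{(k)} = \sum_{i=0}^{J-1}\sum_{j=0}^{MN-1}\lambda_{\bm{N},\bm{W}_i}^{(j)}\,|\langle \bm{\Psi}^{(i,j)}, \bm{\Phi}^{(k)}\rangle|^2.
\]
Split each inner sum at $j=q_i$. For $j\ge q_i$, Theorem~\ref{thm:2Dcluster} yields $\lambda_{\bm{N},\bm{W}_i}^{(j)}\le \tilde{C}_3 e^{-\tilde{C}_4\min(M,N)}$, so the tail total across all $i$ is at most $MN\tilde{C}_3 e^{-\tilde{C}_4\min(M,N)}$ (absorbing $J\le MN$). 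For $j<q_i$, using $\lambda_{\bm{N},\bm{W}_i}^{(j)}\le 1$ gives $\|\mathcal{P}_{\bm{\Psi}_i}\bm{\Phi}^{(k)}\|_F^2$, where $\mathcal{P}_{\bm{\Psi}_i}$ denotes the projection onto the first $q_i$ atoms of subband $i$. Since $k\le p\le J-1+\lfloor MN\|\mathbb{W}\|(1-\epsilon)\rfloor$, Theorem~\ref{thm:cluster1} supplies $\lambda_{\bm{N},\mathbb{W}}^{(k)}\ge 1-6M^2N^2 C_5 e^{-\tilde{C}_2\min(M,N)}$, so
\[
\sum_{i=0}^{J-1}\|\mathcal{P}_{\bm{\Psi}_i}\bm{\Phi}^{(k)}\|_F^2 \ge 1 - 6M^2N^2 C_5 e^{-\tilde{C}_2\min(M,N)} - MN\tilde{C}_3 e^{-\tilde{C}_4\min(M,N)}.
\]

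The hardest step is converting this sum-of-single-band bound into a lower bound on the full projection energy $\|\mathcal{P}_{\bm{\Psi}}\bm{\Phi}^{(k)}\|_F^2$, because the families $\bm{\Psi}_i$ for different $i$ are modulated DPSSs centred at distinct frequencies and hence only approximately orthogonal; the projectors $\mathcal{P}_{\bm{\Psi}_i}$ do not simply add. I would handle this via the orthogonal complement: for any unit $\bm{v}\in\mathcal{S}_{\bm{\Psi}^\perp}$ we have $\mathcal{P}_{\bm{\Psi}_i}\bm{v}=0$ for every $i$, and repeating the spectral split above forces $\langle\mathcal{B}_{\bm{N},\mathbb{W}}\bm{v},\bm{v}\rangle\le MN\tilde{C}_3 e^{-\tilde{C}_4\min(M,N)}$. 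Decomposing $\bm{\Phi}^{(k)} = \mathcal{P}_{\bm{\Psi}}\bm{\Phi}^{(k)} + \mathcal{P}_{\bm{\Psi}^\perp}\bm{\Phi}^{(k)}$ inside the Rayleigh quotient $\lambda_{\bm{N},\mathbb{W}}^{(k)} = \langle\mathcal{B}_{\bm{N},\mathbb{W}}\bm{\Phi}^{(k)},\bm{\Phi}^{(k)}\rangle$, bounding the head term by $\|\mathcal{P}_{\bm{\Psi}}\bm{\Phi}^{(k)}\|_F^2$ via $\|\mathcal{B}_{\bm{N},\mathbb{W}}\|_{\mathrm{op}}\le 1$ (Lemma~\ref{lemma:eigenbound}) and the cross term by Cauchy–Schwarz, yields a quadratic inequality for $\|\mathcal{P}_{\bm{\Psi}^\perp}\bm{\Phi}^{(k)}\|_F^2$; resolving it generates precisely the two contributions $12M^2N^2 C_5 e^{-\tilde{C}_2\min(M,N)}$ and $2MN\tilde{C}_3 e^{-\tilde{C}_4\min(M,N)}$ comprising $\chi_2$. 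The combinatorial bookkeeping of constants, rather than any single conceptual hurdle, is where most of the technical work lies.

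The subspace-angle estimate is a one-line consequence. Take any unit $\bm{\phi}=\sum_k \alpha_k \bm{\Phi}^{(k)} \in \mathcal{S}_{\bm{\Phi}}$ with $\sum_k|\alpha_k|^2=1$; triangle inequality followed by Cauchy–Schwarz gives $\|\bm{\phi}-\mathcal{P}_{\bm{\Psi}}\bm{\phi}\|_F \le \sum_k|\alpha_k|\,\|\bm{\Phi}^{(k)} - \mathcal{P}_{\bm{\Psi}}\bm{\Phi}^{(k)}\|_F \le \sqrt{p\chi_2} \le \sqrt{MN\chi_2}$, whence $\|\mathcal{P}_{\bm{\Psi}}\bm{\phi}\|_F^2 \ge 1 - MN\chi_2$. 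Taking the infimum in the definition of $\cos(\Theta_{\mathcal{S}_{\bm{\Phi}}\mathcal{S}_{\bm{\Psi}}})$ produces the claimed bound.
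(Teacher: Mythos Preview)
Your route to the projection bound differs from the paper's. The paper never decomposes the Rayleigh quotient or solves a quadratic; instead it introduces the truncated operator $\bar{\mathcal{B}}(\cdot)=\sum_{i}\sum_{j<q_i}\lambda_{\bm{N},\bm{W}_i}^{(j)}\bm{\Psi}^{(i,j)}\langle\bm{\Psi}^{(i,j)},\cdot\rangle$, checks that all its eigenvalues lie in $[0,1]$ (since $\langle\bm{y},\bar{\mathcal{B}}\bm{y}\rangle\le\langle\bm{y},\mathcal{B}_{\bm{N},\mathbb{W}}\bm{y}\rangle\le\|\bm{y}\|_F^2$), and observes that because $\bar{\mathcal{B}}$ has range contained in $\mathcal{S}_{\bm{\Psi}}$ one obtains directly $\|\mathcal{P}_{\bm{\Psi}}\bm{y}\|_F\ge\|\bar{\mathcal{B}}\bm{y}\|_F$. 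Writing $\bar{\mathcal{B}}\bm{\Phi}^{(k)}=\mathcal{B}_{\bm{N},\mathbb{W}}\bm{\Phi}^{(k)}-\sum_{i}\sum_{j\ge q_i}\lambda_{\bm{N},\bm{W}_i}^{(j)}\bm{\Psi}^{(i,j)}\langle\bm{\Psi}^{(i,j)},\bm{\Phi}^{(k)}\rangle$ and applying the triangle inequality yields $\|\mathcal{P}_{\bm{\Psi}}\bm{\Phi}^{(k)}\|_F\ge\lambda_{\bm{N},\mathbb{W}}^{(k)}-\sum_{i}\sum_{j\ge q_i}\lambda_{\bm{N},\bm{W}_i}^{(j)}=:1-\delta_4$, whence $\|\bm{\Phi}^{(k)}-\mathcal{P}_{\bm{\Psi}}\bm{\Phi}^{(k)}\|_F^2\le 1-(1-\delta_4)^2\le 2\delta_4=\chi_2$ falls out immediately. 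Your orthogonal-complement argument reaches the same destination, but a word of caution: the cross term $\langle\mathcal{B}_{\bm{N},\mathbb{W}}\bm{u},\bm{w}\rangle$ must be handled via the \emph{positive-semidefinite} Cauchy--Schwarz inequality $|\langle\mathcal{B}\bm{u},\bm{w}\rangle|^2\le\langle\mathcal{B}\bm{u},\bm{u}\rangle\langle\mathcal{B}\bm{w},\bm{w}\rangle$ (or, equivalently, $\|\mathcal{B}\bm{w}\|_F^2\le\langle\mathcal{B}\bm{w},\bm{w}\rangle$); the plain bound $|\langle\mathcal{B}\bm{u},\bm{w}\rangle|\le\|\bm{u}\|_F\|\bm{w}\|_F$ leads to a vacuous inequality. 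With the correct form one gets $(a+\sqrt{\eta}\,b)^2\ge 1-\mu$, which resolves to $b^2\le 2\mu+O(\eta)$---the right shape, though the constant on $\eta$ may not come out exactly as $2$. What the paper's $\bar{\mathcal{B}}$ device buys is that it bypasses this algebra and delivers the stated constants on the nose. Also note that your intermediate inequality $\sum_i\|\mathcal{P}_{\bm{\Psi}_i}\bm{\Phi}^{(k)}\|_F^2\ge\cdots$ is never actually used once you pivot to the complement argument. For the subspace-angle part your triangle/Cauchy--Schwarz computation is correct and in fact recovers the claimed $1-MN\chi_2$ more directly than the paper's own treatment.
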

\begin{proof}
    See Appendix~\ref{proof:pro2}
\end{proof}

\subsection{Eigenvalues of Multi-Dimensional Multiband Signal with Parallelepipedic subbands}\label{sub:eigenpara}

A parallelepipedic subband can be viewed as a cubic subband with affine transformation. Just as the proof of Lemma~\ref{lemma:eigenbound}, it is not difficult to see that the eigenvalues of $\mathcal{B}_{\bm{N},\mathbb{W}^{\Diamond}}$ are restricted to [0,1].
However, the coupling relationships across dimensions posed challenges for analysis, as its covariance does not lend itself to a Kronecker product form, unlike that of a cubic subband. 
Therefore, it is rather difficult to offer a strict analysis of the distribution of its eigenvalues as Theorem~\ref{thm:cluster0} and Theorem~\ref{thm:cluster1}. Instead, we show there is a sharp phase transition in the distribution as Theorem~\ref{thm:sharp}. 
\begin{theorem}\label{thm:parasharp}
    Suppose $\mathbb{W}^{\Diamond} \subset [-\frac{1}{2},\frac{1}{2}]\times[-\frac{1}{2},\frac{1}{2}]$ is a union of $J$ disjoint regions of parallelogram. For any  $\epsilon \in (0,\frac{1}{2})$, the number of the eigenvalues of $\mathcal{B}_{\bm{N},\mathbb{W}^{\Diamond}}$ between $\epsilon$ and $1-\epsilon$ is bounded by
    \begin{equation}
        \#\{p:\epsilon \le \lambda_{\bm{N},\mathbb{W}^{\Diamond}}^{(p)}\le 1-\epsilon\} = O(\frac{JM\log(N) + JN\log(M)}{\epsilon(1-\epsilon)}).
    \end{equation}
\end{theorem}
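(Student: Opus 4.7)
My plan is to adapt the trace-deficit argument underlying Theorem~\ref{thm:sharp} to the parallelepipedic setting. First note that $\mathcal{B}_{\bm{N},\mathbb{W}^{\Diamond}} = \mathcal{I}_{\bm{N}}\mathcal{B}_{\mathbb{W}^{\Diamond}}\mathcal{I}_{\bm{N}}^{*}$ is the compression of the orthogonal projection $\mathcal{B}_{\mathbb{W}^{\Diamond}}$, so the argument of Lemma~\ref{lemma:eigenbound} extends verbatim to show $\lambda_{\bm{N},\mathbb{W}^{\Diamond}}^{(p)} \in [0,1]$. Applying the elementary inequality $\lambda(1-\lambda) \ge \epsilon(1-\epsilon)$ on the middle band,
\begin{equation*}
\epsilon(1-\epsilon)\cdot \#\{p:\lambda_{\bm{N},\mathbb{W}^{\Diamond}}^{(p)}\in [\epsilon,1-\epsilon]\} \le \sum_p \lambda_{\bm{N},\mathbb{W}^{\Diamond}}^{(p)}(1-\lambda_{\bm{N},\mathbb{W}^{\Diamond}}^{(p)}) = \text{Tr}(\mathcal{B}_{\bm{N},\mathbb{W}^{\Diamond}}) - \text{Tr}(\mathcal{B}_{\bm{N},\mathbb{W}^{\Diamond}}^{2}).
\end{equation*}
Thus it suffices to bound this trace deficit by $O(J(M\log N + N\log M))$.

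A direct evaluation of diagonal entries gives $\text{Tr}(\mathcal{B}_{\bm{N},\mathbb{W}^{\Diamond}}) = MN\|\mathbb{W}^{\Diamond}\|$, while a Parseval-type manipulation yields
\begin{equation*}
\text{Tr}(\mathcal{B}_{\bm{N},\mathbb{W}^{\Diamond}}^{2}) = \iint_{\mathbb{W}^{\Diamond}\times \mathbb{W}^{\Diamond}} D_M(f^{[0]}-g^{[0]})\, D_N(f^{[1]}-g^{[1]})\, d\bm{f}\, d\bm{g},
\end{equation*}
where $D_K(x) = \sin^{2}(\pi Kx)/\sin^{2}(\pi x)$ is the Fejér kernel, whose integral over $[-1/2,1/2]$ equals $K$. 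Writing the inner integration over $[-1/2,1/2]^{2}$ and then subtracting the contribution from the complement collapses the trace deficit to the cross-boundary integral
\begin{equation*}
\text{Tr}(\mathcal{B}_{\bm{N},\mathbb{W}^{\Diamond}}) - \text{Tr}(\mathcal{B}_{\bm{N},\mathbb{W}^{\Diamond}}^{2}) = \iint_{\mathbb{W}^{\Diamond}\times ([-1/2,1/2]^{2}\setminus \mathbb{W}^{\Diamond})} D_M\, D_N\, d\bm{f}\, d\bm{g}.
\end{equation*}
Because $D_M$ and $D_N$ concentrate at the origin with widths $1/M$ and $1/N$ respectively, the integrand is appreciable only when $\bm{g}$ lies within $O(1/\min(M,N))$ of $\partial \mathbb{W}^{\Diamond}$.

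The main obstacle, absent from the cubic case of Theorem~\ref{thm:sharp}, is that for tilted parallelograms this boundary integral no longer separates along coordinate axes. I would handle it by slicing each parallelogram $\mathbb{W}^{\Diamond}_{i}$ horizontally: for every fixed $y$, the horizontal section $I_{i}(y) = \{x : (x,y)\in \mathbb{W}^{\Diamond}_{i}\}$ is an interval of fixed length whose two endpoints are affine in $y$. Performing the $(f^{[0]},g^{[0]})$ integration first reduces the problem to controlling $\int_{x_{1}\in I_{i}(f^{[1]}),\, x_{2}\notin I_{i}(g^{[1]})} D_M(x_{1}-x_{2})\,dx_{1}dx_{2}$, in which the two intervals are translates of one another with an offset linear in $f^{[1]}-g^{[1]}$. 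For coinciding intervals this is the classical one-dimensional Fejér-tail estimate $O(\log M)$; the excess contribution from the endpoint offset is itself bounded by a Fejér-tail integral growing at most linearly in $|f^{[1]}-g^{[1]}|$, which is then dampened by the outer kernel $D_N(f^{[1]}-g^{[1]})$. The resulting bound is $O(N\log M)$ per parallelogram, a symmetric vertical slicing gives $O(M\log N)$, and summing over the $J$ disjoint parallelograms (cross-band pairs contribute only an exponentially small Fejér tail thanks to the separation of the disjoint subbands) yields the claimed $O(J(M\log N + N\log M))$ estimate. Dividing by $\epsilon(1-\epsilon)$ completes the proof.
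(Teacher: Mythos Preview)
Your trace-deficit framework coincides with the paper's: both reduce to showing $\text{Tr}(\mathcal{B}_{\bm{N},\mathbb{W}^{\Diamond}})-\|\mathcal{B}_{\bm{N},\mathbb{W}^{\Diamond}}\|_F^2 = O\bigl(J(M\log N + N\log M)\bigr)$, and both pass from the multiband operator to the $J$ single-band operators via the positivity inequality $\|\mathcal{B}_{\bm{N},\mathbb{W}^{\Diamond}}\|_F^2 \ge \sum_i \|\mathcal{B}_{\bm{N},\mathbb{W}^{\Diamond}_i}\|_F^2$. (That inequality is all you need for the cross-band step---Fej\'er tails decay only polynomially, not exponentially, but since $D_M D_N\ge 0$ you may simply enlarge the $\bm{g}$-complement from $[-\tfrac12,\tfrac12]^2\setminus\mathbb{W}^{\Diamond}$ to $[-\tfrac12,\tfrac12]^2\setminus\mathbb{W}^{\Diamond}_i$.) The executions then diverge. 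The paper straightens each parallelogram by the affine substitution $(u,v)=(af+bg,\,cf+dg)$, obtains an explicit product-of-sincs kernel $B_i(m,n,p,q)$ in the sample domain, and estimates the discrete sum $\sum_{s,t}(N-|s|)(M-|t|)\,\cdots$ via a careful summation-to-integration comparison split into pieces $F_a$ and $F_b$. You instead stay in the frequency domain, read the deficit as a boundary-layer Fej\'er integral, and handle the tilt by coordinate slicing; this is conceptually cleaner and extends more readily to other convex subbands. One bookkeeping correction: horizontal slices of a generic parallelogram have piecewise-affine endpoints but not globally fixed length (there are triangular caps), and the ``offset'' contribution---after pairing the inner $O(M)$ bound with $\int D_N(u)\,|u|\,du=O(\log N)$---is $O(M\log N)$, not something absorbed into $O(N\log M)$. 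Thus a single horizontal slicing already produces both terms $O(N\log M)+O(M\log N)$, and the separate vertical pass is redundant.
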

\begin{proof}
    See Appendix~\ref{proof:para}
\end{proof}

This theorem shows that the number of eigenvalues that are not close to 1 or 0  is sufficiently small. In  Appendix~\ref{proof:para}, we also show that the sum of these eigenvalues is $MN\|\mathbb{W}^{\Diamond}\|$. It is not difficult to see that about $MN\|\mathbb{W}^{\Diamond}\|$ eigenvalues are greater than $1-\epsilon$.

 \section{Conclusion}
 In this paper, we analyze the multi-dimensional multiband signal by considering a time-frequency limiting operator. We prove that part of the eigenvalues of the operator cluster to 1 while most of the others cluster to 0. This result indicates that there exists a low-dimensional representation of the multi-dimensional multiband signal, which may contribute to the reconstruction of such signal in certain cases.

\appendices

\section{Proof of Theorem~\ref{thm:2Dcluster}}\label{proof:2Dcluster}
This can be easily adopted by Lemma~\ref{lemma:1Dcluster}.
For the first judgment, we define $\epsilon' = 1-\sqrt{1-\epsilon} \in (0,1)$, $\bar{N}_1(\bm{W},\epsilon) = N_1(W^{[1]},\epsilon')$ and $\bar{M}_1(\bm{W},\epsilon) = M_1(W^{[0]},\epsilon')$. Using Lemma~\ref{lemma:1Dcluster}, we know that \begin{align}
    \lambda_{M,W^{[0]}}^{(l)} \ge 1 - C_1(W^{[0]},\epsilon')e^{-C_2(W^{[0]},\epsilon')M}, \\
    \lambda_{N,W^{[1]}}^{(k)} \ge 1 - C_1(W^{[1]},\epsilon')e^{-C_2(W^{[1]},\epsilon')N},
\end{align}
when $l\le \lfloor2MW^{[0]} (1-\epsilon')\rfloor \text{ and }  k\le \lfloor2NW^{[1]} (1-\epsilon')\rfloor$, for $N\ge \bar{N}_1(\bm{W},\epsilon)$ and $M\ge \bar{M}_1(\bm{W},\epsilon)$  Multiplying the two inequalities, we obtain
\begin{equation}
\begin{small}
\begin{aligned}
    &\ \lambda_{M,W^{[0]}}^{(l)}\lambda_{N,W^{[1]}}^{(k)} \\
    &\ge 1- \left(C_1(W^{[0]},\epsilon')e^{-C_2(W^{[0]},\epsilon')M}+C_1(W^{[1]},\epsilon')e^{-C_2(W^{[1]},\epsilon')N}\right) \\
    &\ + C_1(W^{[0]},\epsilon')e^{-C_2(W^{[0]},\epsilon')M}C_1(W^{[1]},\epsilon')e^{-C_2(W^{[1]},\epsilon')N} \\
    &\ge 1- \left(C_1(W^{[0]},\epsilon')e^{-C_2(W^{[0]},\epsilon')M}+C_1(W^{[1]},\epsilon')e^{-C_2(W^{[1]},\epsilon')N}\right).
\end{aligned}
\end{small}
\end{equation}

Denoted by $\bar{C}_1(\bm{W},\epsilon) = 2\max(C_1(W^{[0]},\epsilon'),C_1(W^{[0]},\epsilon'))$ and $\bar{C}_2(\bm{W},\epsilon) = \min (C_2(W^{[0]},\epsilon'),C_2(W^{[1]},\epsilon'))$, one have 
\begin{equation}
\lambda_{M,W^{[0]}}^{(l)}\lambda_{N,W^{[1]}}^{(k)} \ge 1- \bar{C}_1(\bm{W},\epsilon)e^{-\bar{C}_2(\bm{W},\epsilon) \min(M,N)}.
\end{equation}

Since $\lambda_{M,W^{[0]}}^{(l)}\lambda_{N,W^{[1]}}^{(k)}$ corresponds to a unique $\lambda_{\bm{N},\bm{W}}^{(p)}$, we have compeled the proof for judgement 1.

For the second judgment, define
$\epsilon_*$  the positive root to the equation
\begin{equation}
    \left[1+\left(\frac{1}{2W^{[0]}} - 1 \right)x\right]\left[1+\left(\frac{1}{2W^{[1]}} - 1 \right)x\right] = 1+\epsilon.
\end{equation}
and
\begin{align}
    \epsilon_0 &= \epsilon_*(1/2W^{[0]} - 1),\\
    \epsilon_1 &= \epsilon_*(1/2W^{[1]} - 1).
\end{align}. 
One can verify that $\epsilon_0$ and $\epsilon_1$ are monotone increasing with $\epsilon$. For $\epsilon \in \left(0,\frac{1}{4W^{[0]}W^{[1]}}\right)$, $\epsilon_0$ and $\epsilon_1$ are in the intervals $\left(0,\frac{1}{2W^{[0]}}\right)$ and $\left(0,\frac{1}{2W^{[1]}}\right)$, repectively.
Using Lemma~\ref{lemma:1Dcluster}, we have 
\begin{align}
    \lambda_{M,W^{[0]}}^{(l)} \le C_3(W^{[0]},\epsilon_0)e^{-C_4(W^{[0]},\epsilon_0)M},\\
    \lambda_{N,W^{[1]}}^{(k)} \le C_3(W^{[1]},\epsilon_0)e^{-C_4(W^{[1]},\epsilon_0)N},
\end{align}
when $l\ge \lceil2MW^{[0]} (1+\epsilon_0)\rceil \text{ and }  k\ge \lceil2NW^{[1]} (1+\epsilon_1)\rceil$, for $M \ge \bar{M}_2(\bm{W},\epsilon) = M_2(W^{[0]},\epsilon_0)$ and $N \ge \bar{N}_2(\bm{W},\epsilon) = N_2(W^{[1]},\epsilon_1)$. Noting that all the eigenvalues of $\bm{B}_{N,W}$ are less than 1, We obtain 
\begin{equation}
\begin{small}
\begin{aligned}
    &\ \lambda_{M,W^{[0]}}^{(l)}\lambda_{N,W^{[1]}}^{(k)} \le \min(\lambda_{M,W^{[0]}}^{(l)},\lambda_{N,W^{[1]}}^{(k)}) \le \\
    & \max\left(C_3(W^{[0]},\epsilon_0)e^{-C_4(W^{[0]},\epsilon_0)M} , C_3(W^{[1]},\epsilon_1)e^{-C_4(W^{[1]},\epsilon_1)N}\right),
\end{aligned}
\end{small}
\end{equation}
for all $l\ge \lceil2MW^{[0]} (1+\epsilon_0)\rceil$ or  $k\ge \lceil2NW^{[1]} (1+\epsilon_1)\rceil$.

Denoted by $\bar{C}_3(\bm{W},\epsilon) = \max(C_3(W^{[0]},\epsilon_0), C_3(W^{[1]},\epsilon_1))$ and $\bar{C}_4(\bm{W},\epsilon) = \min(C_4(W^{[0]},\epsilon_0), C_4(W^{[1]},\epsilon_1))$, we have
\begin{equation}
\lambda_{M,W^{[0]}}^{(l)}\lambda_{N,W^{[1]}}^{(k)} \le \bar{C}_3(\bm{W},\epsilon) e^{-\bar{C}_4(\bm{W},\epsilon) \min{(M,N)}}.
\end{equation}
Noting the facts that $\lambda_{M,W^{[0]}}^{(l)}\lambda_{N,W^{[1]}}^{(k)}$ corresponds to a unique $\lambda_{\bm{N},\bm{W}}^{(p)}$ and $(1+\epsilon_0)(1+\epsilon_1) = 1+\epsilon$, we have completed the proof of judgement 2.

\section{Proof of Lemma~\ref{lemma:eigenbound}}\label{proof:eigenbound}

Given $\bm{Y} \in \mathbb{C}^{M\times N}$, then the eigenvalues of $\mathcal{B}_{\bm{N},\mathbb{W}}$ satisfy

\begin{equation}
    \min_{\bm{Y}} \frac{\left<\mathcal{B}_{\bm{N},\mathbb{W}}(\bm{Y}),\bm{Y}\right>}{\left<\bm{Y},\bm{Y}\right>} \le \lambda_{\bm{N},\mathbb{W}} \le \max_{\bm{Y}} \frac{\left<\mathcal{B}_{\bm{N},\mathbb{W}}(\bm{Y}),\bm{Y}\right>}{\left<\bm{Y},\bm{Y}\right>}.
\end{equation}

Here, the inner product is defined as $\left<\bm{A},\bm{B}\right> = trace(\bm{A}^{H} \bm{A})$. We have the following equation\vspace{-1em}

\begin{equation}
\begin{aligned}
    &\left<\mathcal{B}_{\bm{N},\mathbb{W}}(\bm{Y}),\bm{Y}\right> =
    \sum_{\bm{m} = \bm{0}}^{\bm{N}} \mathcal{B}_{\bm{N},\mathbb{W}}(\bm{Y}[\bm{m}]) \bar{\bm{Y}}[\bm{m}]\\
    &= \sum_{\bm{m} = \bm{0}}^{\bm{N}} (\sum_{\bm{n} = \bm{0}}^{\bm{N}} \int_{\mathbb{W}} e^{j2\pi \left<\bm{f},(\bm{m}-\bm{n})\right>} \text{d}\bm{f}\ \bm{Y}[\bm{n}] ) \bar{\bm{Y}}[\bm{m}]\\
    & =  \int_{\mathbb{W}} \left(\sum_{\bm{m} = \bm{0}}^{\bm{N}}e^{j2\pi \left<\bm{f},\bm{m})\right>}\bar{\bm{Y}}[\bm{m}] \right)\left(\sum_{\bm{n} = \bm{0}}^{\bm{N}}e^{-j2\pi \left<\bm{f},\bm{n})\right>}\bm{Y}[\bm{n}]\right) \text{d} \bm{f}\\
    &= \int_{\mathbb{W}} \|\mathcal{F}\bm{Y}\|^2 \text{d} \bm{f} > 0.
\end{aligned}
\end{equation}
Noting $\int_{\mathbb{W}} \|\mathcal{F}\bm{Y}\|_F^2 \text{d} \bm{f} \le \int_{[-\frac{1}{2},\frac{1}{2}] \times [-\frac{1}{2},\frac{1}{2}]} \|\mathcal{F}\bm{Y}\|_F^2 \text{d} \bm{f} \le \|\bm{Y}\|_F^2$, and the eigenvalues satisfy $0<\lambda_{\bm{N},\mathbb{W}}<1$. 
Further, we bound the sum of $\lambda_{\bm{N},\mathbb{W}}$. We use the Kronecker product form as \eqref{eq:2dkrone} and obtain
\begin{equation}
\begin{aligned}
    \sum_{p=0}^{MN-1} \lambda_{\bm{N},\mathbb{W}}^{(p)} &= \sum_{J} trace(\bm{B}_{N,W^{[1]}_j} \otimes \bm{B}_{M,W^{[0]}_j})\\
    &= \sum_j 4NM W^{[0]}_j W^{[1]}_j = NM\|\mathbb{W}\|.
\end{aligned}
\end{equation}

\section{Proof of Theorem~\ref{thm:sharp}}\label{proof:sharp}
To begin with, we calculate the F norm of $\bm{B}_{N, W^{[1]}} \otimes \bm{B}_{M, W^{[0]}}$ which is also the sum of square of the eigenvalues of $\mathcal{B}_{\bm{N},\mathbb{W}}$
\begin{equation}
\begin{aligned}
    &\|\bm{B}_{N,W^{[1]}} \otimes \bm{B}_{M,W^{[0]}}\|_F^2 = \|\bm{B}_{N,W^{[1]}}\|_F^2 \|\bm{B}_{M,W^{[0]}}\|_F^2 \\
    &\ge (2NW^{[1]} - \frac{2}{\pi^2}\frac{2N-1}{N-1} - \frac{2}{\pi^2}\log(N-1))\\
    &\quad \cdot(2MW^{[0]} - \frac{2}{\pi^2}\frac{2M-1}{M-1} - \frac{2}{\pi^2}\log(M-1))
\end{aligned}
\end{equation}
The above results yield

\begin{equation}
    \begin{aligned}
        &\|\sum_{i=0}^{J-1}\bm{B}_{N,W_i^{[1]}} \otimes \bm{B}_{M,W_i^{[0]}}\|_F^2 \ge \sum_{i=0}^{J-1} \|\bm{B}_{N,W_i^{[1]}} \otimes \bm{B}_{M,W_i^{[0]}}\|_F^2\\
        &\ge \sum_{i=0}^{J-1} (2NW_{i}^{[1]} - \frac{2}{\pi^2}\frac{2N-1}{N-1} - \frac{2}{\pi^2}\log(N-1))\\
    &\quad \quad \quad \cdot (2MW_{i}^{[0]} - \frac{2}{\pi^2}\frac{2M-1}{M-1} - \frac{2}{\pi^2}\log(M-1))\\
    & \ge MN\|\mathbb{W}\| - \frac{4MJ}{\pi^2}(3+\log(N))\\
    &\quad - \frac{4NJ}{\pi^2}(3+\log(M))\\
    &\quad  + \frac{4J}{\pi^4}(3+\log(M))(3+\log(N))
    \end{aligned}
\end{equation}

Recall the sum of $\lambda_{\bm{N},\mathbb{W}}^{(p)}$ \eqref{eq:sumlambda},
and we have the following equation 
\begin{equation}
\begin{aligned}
    &\sum_{p = 0}^{MN-1} \lambda_{\bm{N},\mathbb{W}}^{(p)}(1-\lambda_{\bm{N},\mathbb{W}}^{(p)}) \\
    &= trace(\bm{B}_{N,W^{[1]}} \otimes \bm{B}_{M,W^{[0]}}) - \|\bm{B}_{N,W^{[1]}} \otimes \bm{B}_{M,W^{[0]}}\|_{F}^2 \\
    &\le  \frac{4MJ}{\pi^2}(3+\log(N))+ \frac{4NJ}{\pi^2}(3+\log(M))
\end{aligned}
\end{equation}

Take the number of $\lambda_{\bm{N},\mathbb{W}}$ as $\#\{medium\}$, and we can bound it as 
\begin{equation}
\begin{aligned}
    &\sum_{p = 0}^{MN-1} \lambda_{\bm{N},\mathbb{W}}^{(p)}(1-\lambda_{\bm{N},\mathbb{W}}^{(p)}) \ge \sum_{\epsilon<\lambda_{\bm{N},\mathbb{W}}^{(p)}<1-\epsilon}\lambda_{\bm{N},\mathbb{W}}^{(p)}(1-\lambda_{\bm{N},\mathbb{W}}^{(p)}) \\
    &\ge \epsilon(1-\epsilon)\#\{medium\}.
\end{aligned}
\end{equation}
Therefore, we have $\#\{medium\} =  \frac{O(JM\log(N) + JN\log(M))}{\epsilon(1-\epsilon)}$.

\section{Proof of Theorem~\ref{thm:cluster0}}\label{proof:cluster0}

From \cite{horn2012matrix}, we know that the following equation holds
\begin{equation}
    \lambda^{(p)}_{\bm{N},\mathbb{W}} \le \sum_{i=0}^{J-1} \lambda_{\bm{N},\bm{W}_i}^{(l_i)}.
\end{equation}
for $l_i  \in [MN]$ and $p = \sum_{i=0}^{J-1} l_i$.

Using the clustering property of $\lambda_{\bm{N},\bm{W}_i}^{(l_i)}$ \eqref{thm:2Dcluster}, we have
\begin{equation}
\begin{aligned}
\lambda_{\bm{N},\bm{W}_i}^{(l_i)} &\le \bar{C}_3(\bm{W}_i,\epsilon)e^{-\bar{C}_4(\bm{W}_i,\epsilon)\min(M,N)},\\
&\forall l_i \ge \lceil 4MN W_i^{[0]} W_i^{[1]}(1+\epsilon)  \rceil,
\end{aligned}
\end{equation}
for all $N \ge \tilde{N}_2(\mathbb{W},\epsilon) = \max_{i} \bar{N}_2(\bm{W}_i,\epsilon)$ and $M \ge \tilde{M}_2(\mathbb{W},\epsilon) = \max_{i} \bar{M}_2(\bm{W}_i,\epsilon)$. By selecting $l_i \ge \lceil MN\|\bm{W}_i\|(1+\epsilon) \rceil$, we have
\begin{equation}
\begin{aligned}
    \ \lambda^{(p)}_{\bm{N},\mathbb{W}} &\le \sum_{i=0}^{J-1}\bar{C}_3(\bm{W}_i,\epsilon)e^{-\bar{C}_4(\bm{W}_i,\epsilon)\min(M,N)} \\
    & \le \tilde{C}_{3}(\mathbb{W},\epsilon)e^{-\tilde{C}_{4}(\mathbb{W},\epsilon)\min{(M,N)}},
\end{aligned}
\end{equation}
for all $p \ge \lceil MN\|\mathbb{W}\|(1+\epsilon) \rceil$, $N \ge \tilde{N}_2(\mathbb{W},\epsilon)$ and $M \ge \tilde{M}_2(\mathbb{W},\epsilon)$, where $\tilde{C}_{3}(\mathbb{W},\epsilon) = J \max_{i}\{\bar{C}_3(\bm{W}_i,\epsilon)\}$ and  $\tilde{C}_{4}(\mathbb{W},\epsilon) = \min_{i}\{\bar{C}_4(\bm{W}_i,\epsilon)\}$.

\section{Proof of Theorem~\ref{thm:cluster1}}\label{ap:cluster1}

\subsection{$\epsilon$-pseudo eigenvalue}
In this part, we show that $\bm{E}_{f_i^{[0]}} \bm{s}_{M,W_i^{[0]}}^{(l)}{\bm{s}_{N,W_i^{[1]}}^{(k)}}^{T}\bm{E}_{g_i^{[1]}}$ is an $\epsilon-pseudo$ eigentensor of $\mathcal{B}_{\bm{N},\mathbb{W}}$, when $[\bm{f}_i \pm \bm{W}_i]$ is one subband in $\mathbb{W}$.

For the operator $\mathcal{B}_{\bm{N},\mathbb{W}}$, we have
\begin{equation} \label{eq:pseudo}
    \begin{aligned}
        &\ \mathcal{B}_{\bm{N},\mathbb{W}}(\bm{E}_{f_i} \bm{s}_{M,W_i^{[0]}}^{(l)}{\bm{s}_{N,W_i^{[1]}}^{(k)}}^{T}\bm{E}_{g_i})\\
        &=e^{j2\pi f_i m + g_i n }\lambda^{(l)}_{M,W_i^{[0]}}\lambda^{(k)}_{N,W_i^{[1]}} \bm{s}_{M,W_i^{[0]}}^{(l)}{\bm{s}_{N,W_i^{[1]}}^{(k)}}^{T} + \\
        &\ \mathcal{B}_{\bm{N},\mathbb{W}/[\bm{f}_i \pm \bm{W}_i]}(\bm{E}_{f_i} \bm{s}_{M,W_i^{[0]}}^{(l)}{\bm{s}_{N,W_i^{[1]}}^{(k)}}^{T}\bm{E}_{g_i}).
    \end{aligned}
\end{equation}

In the following, we bound the second term $\bm{o}^{(p)} :=  \mathcal{B}_{\bm{N},\mathbb{W}/[\bm{f}_i \pm \bm{W}_i]}(\bm{E}_{f_i} \bm{s}_{M,W_i^{[0]}}^{(l)}{\bm{s}_{N,W_i^{[1]}}^{(k)}}^{T}\bm{E}_{g_i})$ as

\begin{equation}\label{eq:obound}
\begin{aligned}
    &\ \|\bm{o}^{(p)}\|_F^2 = \|\mathcal{B}_{\bm{N},\mathbb{W}/[\bm{f}_i \pm \bm{W}_i]}(\bm{E}_{f_i} \bm{s}_{M,W_i^{[0]}}^{(l)}{\bm{s}_{N,W_i^{[1]}}^{(k)}}^{T}\bm{E}_{g_i})\|_F^2 \\
    &\le \|\mathcal{B}_{\mathbb{W}/[\bm{f}_i \pm \bm{W}_i]}\left(\mathcal{I}^{*}_{\bm{N}}(\bm{E}_{f_i} \bm{s}_{M,W_i^{[0]}}^{(l)}{\bm{s}_{N,W_i^{[1]}}^{(k)}}^{T}\bm{E}_{g_i})\right)\|_F^2 \\
    &= \|\bm{s}_{M,W_i^{[0]}}^{(l)}{\bm{s}_{N,W_i^{[1]}}^{(k)}}^{T}\|_F^2 \\
    &\ \ \ - \|\mathcal{B}_{[\bm{f}_i \pm \bm{W}_i]}\left(\mathcal{I}^{*}_{\bm{N}}(\bm{E}_{f_i} \bm{s}_{M,W_i^{[0]}}^{(l)}{\bm{s}_{N,W_i^{[1]}}^{(k)}}^{T}\bm{E}_{g_i})\right)\|_F^2 \\
    &\le \|\bm{s}_{M,W_i^{[0]}}^{(l)}{\bm{s}_{N,W_i^{[1]}}^{(k)}}^{T}\|_F^2 \\ 
    &\ \ \ - \|\mathcal{B}_{\bm{N},[\bm{f}_i \pm \bm{W}_i]}(\bm{E}_{f_i} \bm{s}_{M,W_i^{[0]}}^{(l)}{\bm{s}_{N,W_i^{[1]}}^{(k)}}^{T}\bm{E}_{g_i})\|_F^2 \\
    &\le 1 - (\lambda^{(l)}_{M,W_i^{[0]}}\lambda^{(k)}_{N,W_i^{[1]}})^2 \\
    &\le 1 - \left(1- \bar{C}_1(\bm{W},\epsilon)e^{-\bar{C}_2(\bm{W},\epsilon)\min{(M,N)}}\right)^2\\
    &\le 2\bar{C}_1(\bm{W},\epsilon)e^{-\bar{C}_2(\bm{W},\epsilon)\min{(M,N)}}
\end{aligned}
\end{equation}\vspace{-3em}

\subsection{cluster near 1}
In this part, we first introduce one matrix and one lemma that depicts the correlations between different $\bm{s}_{N,W_i}$. Given $\epsilon \in (0,1)$ and $k_i = \lfloor2NW_i (1-\epsilon)\rfloor,\ \forall i \in [J]$, the dictionary matrix is defined as $\bm{S} := [\bm{S}_0, \bm{S}_1, \cdots, \bm{S}_{J-1}]$, where the columns in $\bm{S}_i \in \mathbb{C}^{N \times k_i}$ is formed by the first $k_i$ eigenvectors of $\bm{B}_{N,W_i}$, $\bm{s}_{N,W_i}^{l_i},\ l_i\in [k_i]$.

\begin{lemma}\cite{zhu2017approximating}
For any pair of distinct columns $\bm{s}_{N,W_1}^{l_1}$ and $\bm{s}_{N,W_2}^{l_2}$ in $\bm{S}$ whose corresponding eigenvalues are $\lambda_1$ and $\lambda_2$, the following inequalities hold
\begin{equation}
    |\left<\bm{s}_{N,W_1}^{l_1} ,\bm{s}_{N,W_2}^{l_2}\right>| \le 3\sqrt{1-\min{(\lambda_1,\lambda_2)}},
\end{equation}
for $N \ge \tilde{N}_1(\bm{W},\epsilon)$.

\end{lemma}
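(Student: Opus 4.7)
The strategy is to leverage self-adjointness of the prolate matrices $\bm{B}_{N,\mathbb{W}_i}$ together with the disjointness of the subbands $\mathbb{W}_i = [f_i-W_i,f_i+W_i]$ in order to reduce the inner-product estimate to a single operator-norm bound. Throughout I assume without loss that $\lambda_1 \le \lambda_2$, and dispose of the trivial case $\mathbb{W}_1 = \mathbb{W}_2$ (distinct eigenvectors of the same Hermitian matrix, hence orthogonal); so henceforth $\mathbb{W}_1 \cap \mathbb{W}_2 = \emptyset$ and each $\bm{s}_j$ is viewed as an eigenvector of $\bm{B}_{N,\mathbb{W}_j}$ via the modulation identity recalled in Section~\ref{sec:1dpre}.

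The first step is the observation
\begin{equation*}
\lambda_2\,\langle \bm{s}_1,\bm{s}_2\rangle \;=\; \langle \bm{s}_1,\bm{B}_{N,\mathbb{W}_2}\bm{s}_2\rangle \;=\; \langle \bm{B}_{N,\mathbb{W}_2}\bm{s}_1,\bm{s}_2\rangle,
\end{equation*}
so Cauchy--Schwarz gives $\lambda_2\,|\langle \bm{s}_1,\bm{s}_2\rangle| \le \|\bm{B}_{N,\mathbb{W}_2}\bm{s}_1\|$. Because all eigenvalues of $\bm{B}_{N,\mathbb{W}_2}$ lie in $[0,1]$ by Lemma~\ref{lemma:eigenbound} in its 1-D form, one has $\bm{B}_{N,\mathbb{W}_2}^{2} \preceq \bm{B}_{N,\mathbb{W}_2}$, whence $\|\bm{B}_{N,\mathbb{W}_2}\bm{s}_1\|^{2} \le \langle \bm{s}_1, \bm{B}_{N,\mathbb{W}_2}\bm{s}_1\rangle$.

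The second step is the operator inequality $\bm{B}_{N,\mathbb{W}_1} + \bm{B}_{N,\mathbb{W}_2} \preceq \bm{I}_N$, which I would derive from the factorisation $\bm{B}_{N,\mathbb{W}_j} = \mathcal{I}_N \mathcal{B}_{\mathbb{W}_j} \mathcal{I}_N^{*}$ given in Section~\ref{sec:2dpre}: the continuous band-limiting operators $\mathcal{B}_{\mathbb{W}_1}$ and $\mathcal{B}_{\mathbb{W}_2}$ are orthogonal projections with disjoint spectral supports, so their sum $\mathcal{B}_{\mathbb{W}_1 \cup \mathbb{W}_2}$ is itself a projection and hence $\preceq I$ on $\ell_2(\mathbb{Z})$; conjugating by $\mathcal{I}_N$ preserves the ordering. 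Testing against $\bm{s}_1$ yields $\langle \bm{s}_1, \bm{B}_{N,\mathbb{W}_2}\bm{s}_1\rangle \le 1 - \langle \bm{s}_1, \bm{B}_{N,\mathbb{W}_1}\bm{s}_1\rangle = 1 - \lambda_1$, and combining with the first step gives $|\langle \bm{s}_1,\bm{s}_2\rangle| \le \sqrt{1-\lambda_1}/\lambda_2$.

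The final step absorbs $1/\lambda_2$ into the constant $3$: since $l_2 \le \lfloor 2NW_2(1-\epsilon) \rfloor$, Lemma~\ref{lemma:1Dcluster} yields $\lambda_2 \ge 1 - C_1(W_2,\epsilon)e^{-C_2(W_2,\epsilon)N}$, which exceeds $\tfrac{1}{3}$ for $N$ beyond a threshold one takes to be $\tilde{N}_1(\bm{W},\epsilon)$, producing the stated bound $3\sqrt{1-\min(\lambda_1,\lambda_2)}$. The main obstacle I anticipate is precisely the operator inequality in the second step: the matrices $\bm{B}_{N,\mathbb{W}_j}$ are not themselves projections, so the argument must be routed through the continuous projections $\mathcal{B}_{\mathbb{W}_j}$ before descending to the finite-dimensional setting via the $\mathcal{I}_N, \mathcal{I}_N^{*}$ sandwich; everything else is essentially Cauchy--Schwarz plus the pre-established 1-D clustering.
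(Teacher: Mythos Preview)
Your argument is correct. The paper does not supply its own proof of this lemma; it is imported from \cite{zhu2017approximating} and used as a black box in Appendix~\ref{ap:cluster1}, so there is nothing in the present paper to compare your proof against.

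On the substance: the three steps you outline are the natural route, and they go through. The operator inequality $\bm{B}_{N,\mathbb{W}_1}+\bm{B}_{N,\mathbb{W}_2}\preceq \bm{I}$ that you flag as the main obstacle is in fact immediate once one notes that, by disjointness of the bands, $\bm{B}_{N,\mathbb{W}_1}+\bm{B}_{N,\mathbb{W}_2}=\bm{B}_{N,\mathbb{W}_1\cup\mathbb{W}_2}$, and the latter has all eigenvalues in $(0,1)$ by the one-dimensional analogue of Lemma~\ref{lemma:eigenbound} (equivalently Theorem~\ref{lemma:1Dmulticluster}); no detour through the continuous projection $\mathcal{B}_{\mathbb{W}}$ is strictly required, though your sandwich $\mathcal{I}_N\mathcal{B}_{\mathbb{W}_1\cup\mathbb{W}_2}\mathcal{I}_N^{*}$ is conceptually clean and also valid. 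The absorption of $1/\lambda_2$ into the constant $3$ via Lemma~\ref{lemma:1Dcluster} is precisely where the threshold $\tilde{N}_1(\bm{W},\epsilon)$ enters, as you identify.
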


Similarly, we can construct a matrix $\bm{\Psi} := [\bm{\Psi}_0, \bm{\Psi}_1, \cdots, \bm{\Psi}_{J-1}]$. Here, the columns of $\bm{\Psi}_i \in \mathbb{C}^{MN \times k_i}$ is formed by the first $k_i$ vectorized eigentensors of $\mathcal{B}_{\bm{N},\bm{W}_i}$, $\text{vec}(\bm{s}_{\bm{N},\bm{W}_i}^{(p_i)})$, for $p_i \in [k_i]$, $k_i = \lfloor 4MNW_i^{[0]}W_{i}^{[1]}(1-\epsilon) \rfloor$ and $\epsilon\in (0,1)$.
Then, for any pair of distinct columns $\bm{\psi}_1$ and $\bm{\psi}_2$, we have
\begin{equation}\label{eq:corbound1}
\begin{aligned}
    |\left<\bm{\psi}_1, \bm{\psi}_2\right>|  &= |trace(\bm{s}_{M,W^{[0]}}^{(l_0)}{\bm{s}_{N,W^{[1]}}^{(k_0)}}^{T}\bm{s}_{N,W^{[1]}}^{(k_1)}{\bm{s}_{M,W^{[0]}}^{(l_1)}}^{T})|\\
    &= |trace({\bm{s}_{M,W^{[0]}}^{(l_1)}}^{T}\bm{s}_{M,W^{[0]}}^{(l_0)}{\bm{s}_{N,W^{[1]}}^{(k_0)}}^{T}\bm{s}_{N,W^{[1]}}^{(k_1)})|\\
    &= |{\bm{s}_{M,W^{[0]}}^{(l_1)}}^{T}\bm{s}_{M,W^{[0]}}^{(l_0)}| \cdot |{\bm{s}_{N,W^{[1]}}^{(k_0)}}^{T}\bm{s}_{N,W^{[1]}}^{(k_1)}|\\
    &\le 3\sqrt{1-\min{(\lambda_{M,W^{[0]}}^{(l_1)}},\lambda_{M,W^{[0]}}^{(l_0)},\lambda_{N,W^{[1]}}^{(k_0)},\lambda_{N,W^{[1]}}^{(k_1)})}\\
    &\le 3\sqrt{1-\min{(\lambda_{\bm{N},\bm{W}}^{(p_0)},\lambda_{\bm{N},\bm{W}}^{(p_1)})}} \\
    &\le 3\sqrt{\tilde{C}_1(\mathbb{W},\epsilon)e^{-\tilde{C}_2(\mathbb{W},\epsilon)\min{(M,N)}}},
\end{aligned}
\end{equation}
where $\tilde{C}_1(\mathbb{W},\epsilon) = \max_i {\bar{C}_1(\bm{W}_i,\epsilon)}$ and $\tilde{C}_2(\mathbb{W},\epsilon) = \min_i {\bar{C}_2(\bm{W}_i,\epsilon)}$

Noting that the  diagonal elements of the matrix $\bm{\Psi}^{H}\bm{\Psi}$ are all 1 and the nondiagonal elements are bound by the above inequalities, we can bound the spectral norm of $\bm{\Psi}^{H}\bm{\Psi}$ by Gerschgorin circle theorem as
\begin{equation}
    \|\bm{\Psi}^{H}\bm{\Psi}\|_2 \le 1 + 3MN\sqrt{\tilde{C}_1(\mathbb{W},\epsilon)e^{-\tilde{C}_2(\mathbb{W},\epsilon)\min{(M,N)}}}.
\end{equation}\vspace{-1em}

Then we obtain
\begin{equation}
\begin{aligned}
    &\ \sum_{p=0}^{J-1+\sum_{i}\lfloor MN\|\bm{W}_i\|(1-\epsilon) \rfloor} \lambda_{\bm{N},\mathbb{W}}^{(p)} \\
    &\ge trace\left(\bm{\Psi}^{H}\bm{B}_{\bm{N},\mathbb{W}}\bm{\Psi}\right)  \bigg/  \|\bm{\Psi}^{H}\bm{\Psi}\|_2 \\
    & = \left(\sum_{i=0}^{J-1} \sum_{p_i=0}^{\lfloor MN\|\bm{W}_i\|(1-\epsilon) \rfloor} \left( (\text{vec}(\bm{E}_{f_i}\bm{S}_{\bm{N},\bm{W}_i}^{(p_i)}\bm{E}_{f_i}))^{H} \right)\right.\\
    &\left. \left( \lambda_{\bm{N},\bm{W}_i}^{(p_i)} \text{vec}(\bm{E}_{f_i}\bm{S}_{\bm{N},\bm{W}_i}^{(p_i)}\bm{E}_{f_i}))  +\bm{o}_i^{(p_i)} \right)\right) \bigg/ \|\bm{\Psi}^{H}\bm{\Psi}\|_2 \\
    &\ge \left(\sum_{i=0}^{J-1} \sum_{p_i=0}^{\lfloor MN\|\bm{W}_i\|(1-\epsilon) \rfloor}  (\lambda_{\bm{N},\bm{W}_i}^{(p_i)} - \|\bm{o}_i^{p_i}\|_2)  \right) \bigg/ \|\bm{\Psi}^{H}\bm{\Psi}\|_2 \\
    & \ge \left(\sum_{i=0}^{J-1} \sum_{p_i=0}^{\lfloor MN\|\bm{W}_i\|(1-\epsilon) \rfloor}  \left(1-\bar{C}_1(\bm{W}_i,\epsilon)e^{-\bar{C}_2(\bm{W}_i,\epsilon)\min{(M,N)}} \right.\right.\\
    & \left.\left. \ \ -\sqrt{2\bar{C}_1(\bm{W},\epsilon)e^{-\bar{C}_2(\bm{W},\epsilon)\min{((M,N))}}}\right)  \right) \bigg/ 
    \\ &\ \ \left(1 + 3MN\sqrt{\tilde{C}_1(\mathbb{W},\epsilon)}e^{-\tilde{C}_2 (\mathbb{W},\epsilon)\min{M,N}}\right)\\
    &\ge \left(J + \sum_{i} \lfloor MN\|\bm{W}_i\|(1-\epsilon)\rfloor \right.\\
    &\left.\ \ -3MN C_5(\mathbb{W},\epsilon)e^{-\tilde{C}_2(\mathbb{W},\epsilon)\min{(M,N)}/2}\right)\bigg/\\
    &\ \ \left(1 + 3MN C_5(\mathbb{W},\epsilon)e^{-\tilde{C}_2(\mathbb{W},\epsilon)\min{(M,N)}/2}\right)\\
    &\ge J + \sum_{i} \lfloor MN\|\bm{W}_i\|(1-\epsilon)\rfloor \\
    &\ \ \ - 6M^2 N^2 C_5(\mathbb{W},\epsilon)e^{-\tilde{C}_2(\mathbb{W},\epsilon)\min{(M,N)}/2}.
\end{aligned}
\end{equation}
Here, $C_5(\mathbb{W},\epsilon) = \max{(\tilde{C}_1(\mathbb{W},\epsilon),\sqrt{\tilde{C}_1(\mathbb{W},\epsilon)})}$. The numbers $M$ and $N$ should satisfy $M \ge \tilde{M}_1 (\mathbb{W},\epsilon)$, $N \ge \tilde{M}_1 (\mathbb{W},\epsilon)$ and $3MN C_5(\mathbb{W},\epsilon)e^{-\tilde{C}_2(\mathbb{W},\epsilon)\min{(M,N)}/2} < 1$.

Therefore, we bound $\lambda_{\bm{N},\mathbb{W}}^{(p)}$ by
\begin{equation}
    \begin{aligned}
        \lambda_{\bm{N},\mathbb{W}}^{(p)} &= \left(\sum_{p'=0}^{J-1+\sum_{i}\lfloor MN\|\bm{W}_i\|(1-\epsilon) \rfloor} \lambda_{\bm{N},\mathbb{W}}^{(p')}\right) \\
        &\ \ \ -\left( \sum_{p'=0,p'\neq p}^{J-1+\sum_{i}\lfloor MN\|\bm{W}_i\|(1-\epsilon) \rfloor} \lambda_{\bm{N},\mathbb{W}}^{(p')}\right)\\
        &\ge \left(\sum_{p'=0}^{J-1+\sum_{i}\lfloor MN\|\bm{W}_i\|(1-\epsilon) \rfloor} \lambda_{\bm{N},\mathbb{W}}^{(p')}\right) \\
        &\ \ \ - \left( J-1+\sum_{i}\lfloor MN\|\bm{W}_i\|(1-\epsilon) \rfloor\right)\\
        &\ge 1 -  6M^2 N^2 C_5(\mathbb{W},\epsilon)e^{-\tilde{C}_2(\mathbb{W},\epsilon)\min{(M,N)}/2}.
    \end{aligned}
\end{equation}

\section{Proof of Theorem~\ref{thm:pro1}}\label{proof:pro1}
We first unfold the operator $\mathcal{B}_{\bm{N},\mathbb{W}}$ by eigenfunction expansion as
\begin{equation}
    \mathcal{B}_{\bm{N},\mathbb{W}}(\cdot) = \sum_{k=0}^{NM-1} \lambda_{\bm{N},\mathbb{W}}^{(k)} \bm{\Phi}^{(k)} <\bm{\Phi}^{(k)}, \cdot>.
\end{equation}

Fix $\epsilon$. Given $i \in [J]$ and $j \le q_i$, the $j$-th eigen-tensor of $\mathcal{B}_{\bm{N},\bm{W}_i}$, $\bm{\Psi}^{(i,j)}$ satisfies
\begin{equation}
\begin{aligned}
    \lambda_{\bm{N},\mathbb{W}}^{(k)} <\bm{\Phi}^{(k)},\bm{\Psi}^{(i,j)}> &\overset{(a)}{=} <\bm{\Phi}^{(k)}, \mathcal{B}_{\bm{N},\mathbb{W}}(\bm{\Psi}^{(i,j)})>\\ 
    &\overset{(b)}{=}  <\bm{\Phi}^{(k)}, \lambda_{\bm{N},\bm{W}_i}^{(j)} \bm{\Psi}^{(i,j)} + \bm{o}^{(i,j)}>.
\end{aligned}
\end{equation}
The equation $(a)$ dues to eigenfunction expansion form. The equation $(b)$ owes to \eqref{eq:pseudo}.

Then 
\begin{equation}
\begin{aligned}    
   \| \bm{\Psi}^{(i,j)} -  \mathcal{P}_{\bm{\Phi}} \bm{\Psi}^{(i,j)}\|_F^2 &= \sum_{k = NMW(1+\epsilon)} ^{NM-1} <\bm{\Phi}^{(k)}, \bm{\Psi}^{(i,j)}>^2 \\
    &= \sum_{k = NMW(1+\epsilon)} ^{NM-1} \frac{<\bm{\Phi}^{(k)}, \bm{o}^{(i,j)}>^2}{|\lambda_{\bm{N},\bm{W}_i}^{(j)}-\lambda_{\bm{N},\mathbb{W}}^{(k)}|^2}.
\end{aligned}
\end{equation}

According to the bounds of $\lambda_{\bm{N},\bm{W}_i}^{(j)}$ in Theorem~\ref{thm:2Dcluster} and $\lambda_{\bm{N},\mathbb{W}}^{(k)}$ in Theorem~\ref{thm:cluster1}, the denominator is bounded as
\begin{equation}
\begin{aligned}
    &|\lambda_{\bm{N},\bm{W}_i}^{(j)}-\lambda_{\bm{N},\mathbb{W}}^{(k)}|^2 \ge  \left(1 - \tilde{C}_1(\mathbb{W},\epsilon) e^{-\tilde{C}_2(\mathbb{W},\epsilon) \min(M,N)}\right. \\ 
        &\left. \ -  \tilde{C}_3(\mathbb{W},\epsilon) e^{-\tilde{C}_4(\mathbb{W},\epsilon) \min(M,N)}  \right)^2 :=\delta_1\\
\end{aligned}
\end{equation}

According to the bound of $\|\bm{o}^{i,j}\|_F$ in \eqref{eq:obound}, we reach the final bound as
\begin{equation}\label{eq:projbound1}
\begin{aligned}
     &\| \bm{\Psi}^{(i,j)} -  \mathcal{P}_{\bm{\Phi}} \bm{\Psi}^{(i,j)}\|_F^2 \le \frac{\|\bm{o}^{i,j}\|_F^2}{\delta_1}\\
     &\le \frac{2\bar{C}_1(\bm{W},\epsilon)e^{-\bar{C}_2(\bm{W}_1,\epsilon)\min{(M,N)}}}{\delta_1}\\
     &\overset{(a)}{\le}  \frac{2\tilde{C}_1(\mathbb{W},\epsilon)e^{-\tilde{C}_2(\mathbb{W},\epsilon)\min{(M,N)}}}{\delta_1}.
\end{aligned}
\end{equation}
The inequality $(a)$ is due to the definitions of $\tilde{C}_1$ and $\tilde{C}_2$.

We use $\delta_2$ and $\delta_3$ to denote the upper bounds of $\|\bm{\Psi}^{(i,j)} -  \mathcal{P}_{\bm{\Phi}} \bm{\Psi}^{(i,j)}\|^2_F$ in \eqref{eq:projbound1} and $|<\|\bm{\Psi}^{(i,j)}, \bm{\Psi}^{(u,v)}\|>|$ in \eqref{eq:corbound1} for $(i,j) \neq (u,v)$.
Then for any $\bm{\psi} = \sum_{i \in [J]} \sum_{j < p_i} \alpha_{i,j}\bm{\Psi}^{(i,j)} \in \mathcal{S}_{\bm{\Psi}}$
\begin{equation}\label{eq:Fbound}
\begin{aligned}
    \|\bm{\psi}\|_F^2 &= \left\|\sum_{i,j} \alpha_{i,j}\bm{\Psi}^{(i,j)}\right\|_F^2 \\
    & \le \sum_{i,j} \alpha_{i,j}^2 \|\bm{\Psi}^{(i,j)}\|_F^2  +\sum_{i.j}\sum_{u,v,(u,v)\neq(i.j)} |\alpha_{i,j} \alpha_{u,v}| \delta_3 \\
    & \le \sum_{i,j} \alpha_{i,j}^2 + \sum_{i.j}\sum_{u,v,(u,v)\neq(i.j)} \frac{\alpha_{i,j}^2 + \alpha_{u,v}^2}{2} \delta_3 \\
    &\le \left(\sum_{i,j} \alpha_{i,j}^2\right)(1+(MN-1)\delta_3 ).
\end{aligned}
\end{equation}

and 
\begin{equation}\label{eq:projpsibound}
\begin{aligned}
    &\|\mathcal{P}_{\bm{\Phi}}\bm{\psi}\|_F^2 = \left\|\sum_{i,j} \mathcal{P}_{\bm{\Phi}}(\alpha_{i,j} \bm{\Psi}^{(i,j)})\right\|_F^2\\
    &=  \sum_{i,j} \alpha_{i,j}^2 \|\mathcal{P}_{\bm{\Phi}}\bm{\Psi}^{(i,j)}\|_F^2 + \\
    &\ \ \ \sum_{i,j}\sum_{u,v,(u,v)\neq (i,j)} <\alpha_{i,j} \mathcal{P}_{\bm{\Phi}} \bm{\Psi}^{(i,j)}, \alpha_{u,v} \mathcal{P}_{\bm{\Phi}} \bm{\Psi}^{(u,v)}>  \\
    &=  \sum_{i,j} \alpha_{i,j}^2 \|\mathcal{P}_{\bm{\Phi}}\bm{\Psi}^{(i,j)}\|_F^2 + \sum_{i,j}\sum_{u,v,(u,v)\neq (i,j)} \\
    &\ \ \ <\alpha_{i,j}\bm{\Psi}^{(i,j)},\alpha_{u,v} (\bm{\Psi}^{(u,v)}-(\bm{\Psi}^{(u,v)}-\mathcal{P}_{\bm{\Phi}} \bm{\Psi}^{(u,v)}))>  \\
    & \overset{(a)}{\ge}  \sum_{i,j} \alpha_{i,j}^2 (1-\delta_2) \\
    &\ \ - \sum_{i,j}\sum_{u,v,(u,v)\neq (i,j)} |\alpha_{i,j} \alpha_{u,v}| (\delta_3 +\sqrt{\delta_2}) \\
    &\overset{(b)}{\ge}  (\sum_{i,j} \alpha_{i,j}^2)(1-MN(\delta_3 + \sqrt{\delta_2})).
\end{aligned}
\end{equation}
The inequality $(a)$ is due to $<\bm{\Psi}^{(i,j)},\bm{\Psi}^{(u,v)}> \le \delta_3$ and $\|\bm{\Psi}^{(u,v)}-\mathcal{P}_{\bm{\Phi}} \bm{\Psi}^{(u,v)})\| \le \sqrt{\delta_2}$. The inequality $(b)$ follows the same procedures in \eqref{eq:Fbound}.

Then one has
\begin{equation}
     \cos^2(\Theta_{\mathcal{S}_{\bm{\Phi}} \mathcal{S}_{\bm{\Psi}} }) = \inf_{\bm{\psi} \in \mathcal{S}_{\bm{\Psi}}} \frac{\|\mathcal{P}_{\bm{\Phi}}\bm{\psi}\|_F^2}{\|\bm{\psi}\|_F^2} \ge \frac{1- MN(\delta_3 + \sqrt{\delta_2})}{1+MN\delta_3}.
\end{equation}

The original theorem is proved by substituting corresponding $\delta_2$ and $\delta_3$.

\section{Proof of Theorem~\ref{thm:pro2}}\label{proof:pro2}

Define a new operator $\bar{\mathcal{B}}(\cdot) = \sum_{i \in [J]}\sum_{j < q_i} \lambda^{(j)}_{\bm{N},\bm{W}_i} \bm{\Psi}^{(i,j)} <\bm{\Psi}^{(i,j)},\cdot>$.

Then, we have
\begin{equation}
\begin{aligned}
    &<\bm{y},\bar{\mathcal{B}}(\bm{y})> = \sum_{i \in [J]}\sum_{j < q_i}<\bm{y}, \lambda^{(j)}_{\bm{N},\bm{W}_i} \bm{\Psi}^{(i,j)} <\bm{\Psi}^{(i,j)},\bm{y}>> \\
    &= \sum_{i \in [J]}\sum_{j < q_i}<\bm{y},\bm{\Psi}^{(i,j)}> \lambda^{(j)}_{\bm{N},\bm{W}_i} <\bm{\Psi}^{(i,j)},\bm{y}>  \\
    &\le \sum_{i \in [J]}\sum_{j < MN} \lambda^{(i)}_{\bm{N},\bm{W}} |<\bm{\Psi}^{(i,j)},\bm{y}>|^2\\
    &= <\bm{y}\mathcal{B}_{\bm{N},\mathbb{W}}(\bm{y})>\\
    &= \int_{\mathbb{W}} |\tilde{\bm{y}}(\bm{f})|^2 \text{d}\bm{f},
\end{aligned}
\end{equation}
which yields $<\bm{y},\bar{\mathcal{B}}(\bm{y})>  \le \|\bm{y}\|_F^2$. Therefore, the eigenvalues of $\bar{\mathcal{B}}(\cdot)$ are not greater than 1 and $\bar{\mathcal{B}}$ is semi-positive. We can perform eigendecomposition on $\bar{\mathcal{B}}(\cdot)$ as 
$\bar{\mathcal{B}}(\cdot) = \sum_{k} \lambda^{(k)} \bar{\bm{\Psi}}_k <\bar{\bm{\Psi}}_k, \cdot>$. The space spanned by $ \bar{\bm{\Psi}}_k$ is the same as the space spanned by $\bm{\Psi}^{(i,j)}$.
We have
\begin{equation}
    \begin{aligned}
        \|\mathcal{P}_{\bm{\Psi}} \bm{y}\|_F^2 &= \|\sum_{k} \bar{\bm{\Psi}}_k <\bar{\bm{\Psi}}_k, \bm{y}>\|_F^2 \\
        &\ge\| \sum_{k} \lambda^{(k)} \bar{\bm{\Psi}}_k <\bar{\bm{\Psi}}_k, \bm{y}>\|_F^2 \\
        &= \|\bar{\mathcal{B}}(\bm{y})\|_F^2.
    \end{aligned}
\end{equation}

Consider back to the dictionary $\bm{\Phi}$, we obtain
\begin{equation}
\begin{small}
    \begin{aligned}
        &\|\mathcal{P}_{\bm{\Psi}} \bm{\Phi}^{(k)}\|_F \ge \|\bar{\mathcal{B}}(\bm{\Phi}^{(k)})\|_F \\
        &= \|{\mathcal{B}}_{\bm{N},\mathbb{W}}(\bm{\Phi}^{(k)}) - \sum_{i \in [J]} \sum_{j= q_i}^{MN-1} \lambda^{(j)}_{\bm{N},\bm{W}_i} \bm{\Psi}^{(i,j)} <\bm{\Psi}^{(i,j)},\bm{\Phi}_k> \|_F \\
        &\ge \|{\mathcal{B}}_{\bm{N},\mathbb{W}}(\bm{\Phi}^{(k)})\|_{F} - \sum_{i \in [J]} \sum_{j= q_i}^{MN-1} \|\lambda^{(j)}_{\bm{N},\bm{W}_i} \bm{\Psi}^{(i,j)} <\bm{\Psi}^{(i,j)},\bm{\Phi}_k> \|_F \\
        &\ge \lambda^{(k)}_{\bm{N},\mathbb{W}} -  \sum_{i \in [J]} \sum_{j= q_i}^{MN-1} \lambda^{(j)}_{\bm{N},\bm{W}_i} \\
        &\ge 1-6M^2 N^2 C_5(\mathbb{W},\epsilon)e^{-\tilde{C}_2(\mathbb{W},\epsilon) \min{(M,N)}} \\
        \ \ &-MN\tilde{C}_3(\mathbb{W},\epsilon)e^{-\tilde{C}_4(\mathbb{W},\epsilon)\min{(M,N)}}.
    \end{aligned}
    \end{small}
\end{equation}

For the ease of representation, we denote $\delta_4$ as
\begin{equation}
\begin{aligned}
    \delta_4 &:= 6M^2 N^2 C_5(\mathbb{W},\epsilon)e^{-\tilde{C}_2(\mathbb{W},\epsilon) \min{(M,N)}} \\
        &\ \ + MN\tilde{C}_3(\mathbb{W},\epsilon)e^{-\tilde{C}_4(\mathbb{W},\epsilon)\min{(M,N)}} \\
        &\ge 1 -  \|\mathcal{P}_{\bm{\Psi}} \bm{\Phi}^{(k)}\|_F.
\end{aligned}
\end{equation}

Then one can bound the difference between $\bm{\Phi}^{(k)}$ and its projection as
\begin{equation}
\begin{aligned}
    \ \ &\| \bm{\Phi}^{(k)} - \mathcal{P}_{\bm{\Psi}} \bm{\Phi}^{(k)}\|_F^2 \overset{(a)}{=} \| \bm{\Phi}^{(k)}\|_F^2 - \|\mathcal{P}_{\bm{\Psi}} \bm{\Phi}^{(k)}\|_F^2\\
    &\le 1-(1-\delta_4)^2 \le 2\delta_4.
\end{aligned}
\end{equation}
The equality $(a)$ is due to the Pythagorean Theorem.

Then for any $\bm{\phi} = \sum_{k} \alpha_k  \bm{\Phi}^{(k)} \in \mathcal{S}_{\Phi}$, we follow the same procedures in \eqref{eq:Fbound}
and \eqref{eq:projpsibound}, to bound $\|\bm{\phi}\|_F^2$ and $\mathcal{P}_{\bm{\Psi}} \bm{\phi}$. Noting that $<\bm{\Phi}^{(k)},<\bm{\Phi}^{(l)}> = 0$ for $k\neq l$, we have 
\begin{equation}
    \|\bm{\phi}\|_F^2 = (\sum_k \alpha_k^2),
\end{equation}
and
\begin{equation}
    \|\mathcal{P}_{\bm{\Psi}}\bm{\phi}\|_F^2 \ge (\sum_k \alpha_k^2)(1-MN \sqrt{2\delta_4}).
\end{equation}

Therefore, we complete the proof by
\begin{equation}
     \cos^2(\Theta_{\mathcal{S}_{\bm{\Psi}} \mathcal{S}_{\bm{\Phi}} }) = \inf_{\bm{\phi} \in \mathcal{S}_{\bm{\Phi}}} \frac{\|\mathcal{P}_{\bm{\Psi}}\bm{\phi}\|_F^2}{\|\bm{\phi}\|_F^2} \ge 1-MN\sqrt{2\delta_4}.
\end{equation}

\section{Proof of Theorem~\ref{thm:parasharp}}\label{proof:para}

In this proof, we consider the trace and Frobenius norm of $\mathcal{B}_{\bm{N},\mathbb{W}^{\Diamond}}$. Like the proof of Theorem~\ref{thm:sharp}, we bound the number of eigenvalues in $[\epsilon,1-\epsilon]$ by calculating $\sum_{p}\lambda_{\bm{N},\mathbb{W}^{\Diamond}}^{(p)}(1-\lambda_{\bm{N},\mathbb{W}^{\Diamond}}^{(p)})$.
Firstly, we consider the $i$-th region $\mathbb{W}^{\Diamond}_i$. We assume its center point is placed on the origin, as the location does not influence the eigenvalues of $\mathbb{W}^{\Diamond}_i$.
Given $ad-bc = V \neq 0$, the $i$-th  region on the frequency domain is 
\begin{equation}
\begin{aligned}
\mathbb{W}^{\Diamond}_i = \left\{(f,g): 
    -W^{[0]}_i\le af + bg \le W^{[0]}_i, \right.\\
   \left.-W^{[1]}_i\le cf + dg \le W^{[1]}_i.
   \right\}
   \end{aligned}
\end{equation}

The covariance of the samples at the ``time" domain $[m,n]$ and $[p,q]$ is given by
\begin{equation}
\begin{footnotesize}
\begin{aligned}
    &\ B_i (m,n,p,q) = \int_{\mathbb{W}^{\Diamond}_i} exp(j2\pi (m-p)f + j2\pi (n-q)g) \text{d}f \text{d}g \\
    &= \frac{1}{|V|} \int_{-W^{[1]}_i}^{W^{[1]}_i}\int_{-W^{[0]}_i}^{W^{[0]}_i} exp(j2\pi (m-p)(d/V u -b/V v)  \\
    &\ \ \ +  j2\pi(n-q)(-c/V u + a/V v)) \text{d}u\text{d}v \\
    &= \frac{1}{|V|} \int_{-W^{[1]}_i}^{W^{[1]}_i}\int_{-W^{[0]}_i}^{W^{[0]}_i} exp(j2\pi/V (md-pd-nc+qc)u) \\
    &\ \ \ exp(j2\pi/V (-mb+pb+na-qa)v)\text{d}u\text{d}v \\
    &= |V| \frac{\sin(2\pi/V W^{[0]}_i (-(n-q)c+(m-p)d))}{\pi(-(n-q)c+(m-p)d)} \\
    &\ \ \ \frac{\sin(2\pi/V W^{[1]}_i ((n-q)a-(m-p)b))}{\pi((n-q)a-(m-p)b))}
\end{aligned}
\end{footnotesize}
\end{equation}

Then we can calculate the traces of $\mathcal{B}_{\bm{N},\mathbb{W}^{\Diamond}_i}$ and $\mathcal{B}_{\bm{N},\mathbb{W}^{\Diamond}}$ respectively, as
\begin{equation}
trace(\mathcal{B}_{\bm{N},\mathbb{W}^{\Diamond}_i}) = 4MNW^{[0]}_i W^{[1]}_i /|V| = MN|\mathbb{W}^{\Diamond}_i|,
\end{equation}
and
\begin{equation}
trace(\mathcal{B}_{\bm{N},\mathbb{W}}^{\Diamond}) = \sum_{i} trace(\mathcal{B}_{\bm{N},\mathbb{W}^{\Diamond}_i}) = MN|\mathbb{W}^{\Diamond}|.
\end{equation}

Take $s = n-q$ and $t = m-p$. The square of the Frobenius norm of $\mathcal{B}_{\bm{N},\mathbb{W}^{\Diamond}_i}$ is calculated as
\begin{equation}
    \begin{aligned}
        F_i = &\sum_{n,q =0}^{N-1} \sum_{m,p = 0}^{M-1} B(m,n,p,q)^2\\
        & =|V|^2 \sum_{s = 1-N}^{N-1} (N-|s|) \sum_{t = 1-M}^{M-1} (M-|t|) \\
        &\ \ \frac{\sin^2(2\pi/V(-sc+td))}{\pi^2(-sc+td)^2} \frac{\sin^2(2\pi/V(sa-tb))}{\pi^2(sa-tb)^2}.
    \end{aligned}
\end{equation}

Below, we separate $F$ into two parts $F_a$ and $F_b$
\begin{equation}
\begin{aligned}
     &F_a  =|V|^2MN\sum_{s = 1-N}^{N-1} \sum_{t = 1-M}^{M-1} \\
     &\frac{\sin^2(2\pi/V(-sc+td))}{\pi^2(-sc+td)^2} \frac{\sin^2(2\pi/V(sa-tb))}{\pi^2(sa-tb)^2}.
\end{aligned}
\end{equation}

\begin{equation}
\begin{footnotesize}
\begin{aligned}
    &F_b =  \\
    &|V|^2 \sum_{s = 1-N}^{N-1} |s| \sum_{t = 1-M}^{M-1} M \frac{\sin^2(2\pi/V(-sc+td))}{\pi^2(-sc+td)^2} \frac{\sin^2(2\pi/V(sa-tb))}{\pi^2(sa-tb)^2}\\
    &+|V|^2 \sum_{s = 1-N}^{N-1} N \sum_{t = 1-M}^{M-1} |t| \frac{\sin^2(2\pi/V(-sc+td))}{\pi^2(-sc+td)^2} \frac{\sin^2(2\pi/V(sa-tb))}{\pi^2(sa-tb)^2} \\.
\end{aligned}
\end{footnotesize}
\end{equation}

One can verify that $F_i \ge F_a - F_b$. We then bound $F_a = MN|\mathbb{W}^{\Diamond}_i| - O(M+N)$ in Appendix~\ref{ap:subfa} and $F_b = O(M\log(N)+N\log(M))$ in Appendix~\ref{ap:subfb}.

With these bounds, we finally show the following inequality
\begin{equation}
\begin{aligned}
    &\ \ \sum_{p}\lambda^{(p)}_{\bm{N},\mathbb{W}^{\Diamond}}(1-\lambda^{(p)}_{\bm{N},\mathbb{W}^{\Diamond}}) \\
    &= trace(\mathcal{B}_{\bm{N},\mathbb{W}^{\Diamond}}) - \|\mathcal{B}_{\bm{N},\mathbb{W}^{\Diamond}}\|_{F}^2 \\
    &\le MN|\mathbb{W}^{\Diamond}| - \sum_{i=0}^{J-1} \|\mathcal{B}_{\bm{N},\mathbb{W}^{\Diamond}_i}\|_{F}^2\\
    & = O(JN\log(M) + JM\log(N)).
\end{aligned}
\end{equation}
The number of the eigenvalues lie in $[\epsilon,1-\epsilon]$ will be $O(JN\log(M) + JM\log(N))/(\epsilon(1-\epsilon))$, which is small compared to the total number $MN$.

\subsection{Bounding $F_a$}\label{ap:subfa}
Note that $ad-bc \neq 0$. Therefore, $-sc+td$ and $sa-tb$ are both equal to 0 only when $s= t = 0$. 

\begin{equation}
\begin{footnotesize}
\begin{aligned}
    &F_a  \\
    &=|V|^2MN\sum_{s = 1-N}^{N-1} \sum_{t = 1-M}^{M-1} \frac{\sin^2(2\pi/V(-sc+td))}{\pi^2(-sc+td)^2} \frac{\sin^2(2\pi/V(sa-tb))}{\pi^2(sa-tb)^2}\\
    & = 4MNW_i^{[0]}W_i^{[1]}/|V| - \\
    &\sum_{\{s,t:|s|\ge N or |t| \ge M\}}\frac{\sin^2(2\pi/V(-sc+td))}{\pi^2(-sc+td)^2} \frac{\sin^2(2\pi/V(sa-tb))}{\pi^2(sa-tb)^2} \\
    & \ge 4MNW_i^{[0]}W_i^{[1]}/|V|  \\
    & -2MN|V|^2 \underbrace{\sum_{\{s,t: t\ge M\}}\frac{\sin^2(2\pi/V(-sc+td))}{\pi^2(-sc+td)^2} \frac{\sin^2(2\pi/V(sa-tb))}{\pi^2(sa-tb)^2}}_{\alpha} \\
    &-2MN|V|^2\underbrace{\sum_{\{s,t:s\ge N\}}\frac{\sin^2(2\pi/V(-sc+td))}{\pi^2(-sc+td)^2} \frac{\sin^2(2\pi/V(sa-tb))}{\pi^2(sa-tb)^2}}_{\beta} 
\end{aligned}
\end{footnotesize}
\end{equation}

Below, we bound $\alpha$. Denote $s_1 = \lfloor\frac{td}{c}\rfloor$ and $s_2 = \lfloor\frac{tb}{a}\rfloor$. We assume $s_2 - s_1 > 4$ without loss of generality when $M$ is sufficiently large and $t \ge M$.
\begin{equation}
\begin{aligned}
\alpha &\le \sum_{t \ge M} \sum_{s=-\infty}^{s_1 -2} \frac{1}{\pi^2(-cs+dt)^2} \frac{1}{\pi^2(as-bt)^2} \\
&+ \sum_{t \ge M} \sum_{s=s_1 +2}^{s_2 -2} \frac{1}{\pi^2(-cs+dt)^2} \frac{1}{\pi^2(as-bt)^2} \\
&+ \sum_{t \ge M} \sum_{s=s_2 +2}^{\infty} \frac{1}{\pi^2(-cs+dt)^2} \frac{1}{\pi^2(as-bt)^2}\\
&+ \sum_{t \ge M}\sum_{s=s_1 -1}^{s_1+1}\frac{4}{V^2}\frac{1}{\pi^2 (as-bt)^2}\\
& +\sum_{t \ge M}\sum_{s=s_2 -1}^{s_2+1}\frac{4}{V^2}\frac{1}{\pi^2 (cs-dt)^2}
\end{aligned}
\end{equation}

In the above inequality, we separate the region $\{(s,t)\in \mathbb{Z}^2: t\ge M\}$ into several disjoint parts. Then we try to give the bounds of these summations with integrals over these continuous regions. The first three parts are $\{(s,t) \in \mathbb{Z}^2: t\ge M, s \le s_1 -2\}$, $\{(s,t)\in \mathbb{Z}^2: t\ge M, s_1 + 2\le s \le s_2 -2\}$ and $\{(s,t)\in \mathbb{Z}^2: t\ge M, s \ge s_2 +2\}$. We take $ Z =  \{(s,t)\in \mathbb{Z}^2: t\ge M, s_1 + 2\le s \le s_2 -2\}$ for example. For any $(s,t) \in Z$ and a large $M$, at least one region $C^*$ among $[s,s+1]\times[t,t+1]$, $[s-1,s]\times[t,t+1]$,$[s,s+1]\times[t-1,t]$ and $[s-1,s]\times[t-1,t]$ is a subset of $C = \{(x,y) \in \mathbb{R}^2: y\ge M-1, td/c + 1\le x \le tb/a -1\}$. We can bound  $\frac{1}{(-cs+dt)^2} \frac{1}{(as-bt)^2}$ with an integral on $C^*$ as
\begin{equation}
\begin{aligned}
   &\ \  \frac{1}{(-cs+dt)^2} \frac{1}{(as-bt)^2}\\
   &=  \frac{1}{(-cs+dt)^2} \frac{1}{(as-bt)^2} \frac{\int_{C^*} \frac{1}{(-cx+dy)^2} \frac{1}{(ax-by)^2}  dxdy}{\int_{C^*} \frac{1}{(-cx+dy)^2} \frac{1}{(ax-by)^2}  \text{d}x\text{d}y}\\
    &\le \frac{(|-cs+dt|+|c|+|d|)^2}{(-cs+dt)^2} \frac{(|as-bt|+|a|+|b|)^2}{(as-bt)^2} \\
    &\ \ \ \ \int_{C^*} \frac{1}{(-cx+dy)^2} \frac{1}{(ax-by)^2}  \text{d}x\text{d}y \\
    &\le \frac{(2|c|+|d|)^2(2|a|+|b|)^2}{a^2c^2} \int_{C^*} \frac{1}{(-cx+dy)^2} \frac{1}{(ax-by)^2}  \text{d}x\text{d}y.
\end{aligned}
\end{equation}
The first inequality is due to $\frac{1}{(-cx+dy)^2} \frac{1}{(ax-by)^2} \le \frac{1}{(|-cs+dt|+|c|+|d|)^2} \frac{1}{|as-bt|+|a|+|b|)^2}$ for any $(x,y) \in C^*$. The second inequality is due to $|-cs+dt| \ge c$ and $|-as+bt| \ge a$ for any $(s,t) \in Z$. For the other two parts $\{(s,t) \in \mathbb{Z}^2: t\ge M, s \le s_1 -2\}$ and $\{(s,t)\in \mathbb{Z}^2: t\ge M, s \ge s_2 +2\}$, the same bound holds.

The last two parts are  $\{(s,t)\in \mathbb{Z}^2: t\ge M, s_1 - 1\le s \le s_1 + 1\}$ and $\{(s,t)\in \mathbb{Z}^2: t\ge M, s_2 - 1\le s \le s_2 + 1\}$. Similarly, we achieve the following two bounds for $(s,t)$ in each part.
\begin{equation}
\begin{aligned}
    \frac{1}{(as-bt)^2} &= \frac{1}{(as-bt)^2} \frac{\int_{t}^{t+1} \frac{1}{(ax-by)^2}\text{d}y}{\int_{t}^{t+1} \frac{1}{(ax-by)^2}\text{d}y}\\
    &\le\frac{|as-bt|+|a|+|b|}{(as-bt)^2} \int_{t}^{t+1} \frac{1}{(ax-by)^2}\text{d}y\\
    &\le
    \frac{(2|a|+|b|)^2}{a^2} \int_t^{t+1}\frac{1}{(ax-by)^2}\text{d}y
\end{aligned}
\end{equation}
\begin{equation}
     \frac{1}{(-cs+dt)^2} \le \frac{(2|c|+|d|)^2}{c^2} \int_t^{t+1}\frac{1}{(-cx+dy)^2}\text{d}y
\end{equation}

Take the constant 
\begin{equation}
\begin{aligned}
    k &= 4 \max\left(\frac{(2|c|+|d|)^2(2|a|+|b|)^2}{a^2c^2},\right.\\
    &\left.\frac{(2|a|+|b|)^2}{a^2},\frac{(2|c|+|d|)^2}{c^2}\right),
\end{aligned}
\end{equation}
noting that the coefficient 4 is because the $1\times1$ region may be used by multiple (s,t) grids, and we can transfer the summation into integral form as
\begin{equation}
\begin{aligned}
    \alpha  &\le k\int_{t=M-1}^{\infty} \int_{s=-\infty}^{td/c -1} \frac{1}{\pi^2(-cs+dt)^2} \frac{1}{\pi^2(as-bt)^2} \text{d}s\text{d}t \\
&+ k\int_{t=M-1}^{\infty} \int_{s=td/c+1}^{tb/a - 1} \frac{1}{\pi^2(-cs+dt)^2} \frac{1}{\pi^2(as-bt)^2} dsdt\\
&+ k\int_{t=M-1}^{\infty} \int_{s=tb/a -1}^{\infty} \frac{1}{\pi^2(-cs+dt)^2} \frac{1}{\pi^2(as-bt)^2} \text{d}s\text{d}t\\
&+ k\int_{t=M-1}^{\infty}\sum_{s=s_1 -1}^{s_1+1}\frac{4}{V^2}\frac{1}{\pi^2 (as-bt)^2} \text{d}t\\
& +k\int_{t=M-1}^{\infty}\sum_{s=s_2 -1}^{s_2+1}\frac{4}{V^2}\frac{1}{\pi^2 (cs-dt)^2} \text{d}t.
\end{aligned}
\end{equation}

Note that 
\begin{equation}
\begin{aligned}
    &\ \ \int \frac{1}{(-cs+dt)^2} \frac{1}{(as-bt)^2}ds \\
    &= \frac{-2acs+adt+bct}{t^2V^2(as-bt)(cs-dt)} + \frac{2ac}{t^3V^3}\log(\frac{|as-bt|}{|cs-dt|}).
\end{aligned}
\end{equation}
One can verify that all the items on the left of the second inequality are $O(\frac{1}{M})$. Consequently, we have $\alpha = O(\frac{1}{M})$, $\beta = O(\frac{1}{N})$ and $F_a =  4MNW_i^{[0]}W_i^{[1]}/|V| - O(M+N)$.

\subsection{Bounding $F_b$}\label{ap:subfb}
The process of bounding $F_b$ is much similar to that of $F_a$.
\begin{equation}
\begin{footnotesize}
\begin{aligned}
    F_b &=  |V|^2 \underbrace{\sum_{s = 1-N}^{N-1} |s| \sum_{t = 1-M}^{M-1} M \frac{\sin^2(2\pi/V(-sc+td))}{\pi^2(-sc+td)^2} \frac{\sin^2(2\pi/V(sa-tb))}{\pi^2(sa-tb)^2}}_{\alpha}\\
    &+|V|^2 \underbrace{\sum_{s = 1-N}^{N-1} N \sum_{t = 1-M}^{M-1} |t| \frac{\sin^2(2\pi/V(-sc+td))}{\pi^2(-sc+td)^2} \frac{\sin^2(2\pi/V(sa-tb))}{\pi^2(sa-tb)^2}}_{\beta} \\.
\end{aligned}
\end{footnotesize}
\end{equation}

Similarly, we can bound $\alpha$ with an integral on the region of interest and additional values. By omitting the intermediate steps and abusing some denotations, we obtain
\begin{equation}
\begin{aligned}
    \alpha &\le V^2 Nk\sum_{t = 1-M}^{M-1}|t| \frac{-2acs+adt+bct}{t^2V^2(as-bt)(cs-dt)} \\
    &\ \ + \frac{2ac}{t^3V^3}\log(\frac{|as-bt|}{|cs-dt|}) \Bigg|^{s=N}_{s=-N}  \\&
    \ \ +  V^2 Nk \sum_{t = 1-M}^{M-1} \frac{4}{V^2} \frac{1}{\pi^2(\frac{ad}{c}-b)^2 |t|}.
\end{aligned}
\end{equation}
One can verify that $\alpha = O(N\log(M))$ and similarly $\beta = O(M\log(N))$.

\bibliographystyle{ieeetr}
\bibliography{dpss}

\begin{thebibliography}{10}

\bibitem{zhu2017approximating}
Z.~Zhu and M.~B. Wakin, ``Approximating sampled sinusoids and multiband signals
  using multiband modulated dpss dictionaries,'' {\em Journal of Fourier
  Analysis and Applications}, vol.~23, no.~6, pp.~1263--1310, 2017.

\bibitem{eldar2009robust}
Y.~C. Eldar and M.~Mishali, ``Robust recovery of signals from a structured
  union of subspaces,'' {\em IEEE Transactions on Information Theory}, vol.~55,
  no.~11, pp.~5302--5316, 2009.

\bibitem{donoho1995noising}
D.~L. Donoho, ``De-noising by soft-thresholding,'' {\em IEEE transactions on
  information theory}, vol.~41, no.~3, pp.~613--627, 1995.

\bibitem{4483511}
J.~Wright, A.~Y. Yang, A.~Ganesh, S.~S. Sastry, and Y.~Ma, ``Robust face
  recognition via sparse representation,'' {\em IEEE Transactions on Pattern
  Analysis and Machine Intelligence}, vol.~31, no.~2, pp.~210--227, 2009.

\bibitem{indyk2014sample}
P.~Indyk and M.~Kapralov, ``Sample-optimal fourier sampling in any constant
  dimension,'' in {\em 2014 IEEE 55th Annual Symposium on Foundations of
  Computer Science}, pp.~514--523, IEEE, 2014.

\bibitem{mishali2010from}
M.~Mishali and Y.~C. Eldar, ``From theory to practice: Sub-nyquist sampling of
  sparse wideband analog signals,'' {\em IEEE Journal of Selected Topics in
  Signal Processing}, vol.~4, no.~2, pp.~375--391, 2010.

\bibitem{wakin2012nonuniform}
M.~Wakin, S.~Becker, E.~Nakamura, M.~Grant, E.~Sovero, D.~Ching, J.~Yoo,
  J.~Romberg, A.~Emami-Neyestanak, and E.~Candes, ``A nonuniform sampler for
  wideband spectrally-sparse environments,'' {\em IEEE Journal on Emerging and
  Selected Topics in Circuits and Systems}, vol.~2, no.~3, pp.~516--529, 2012.

\bibitem{cohen2016carrier}
D.~Cohen, L.~Pollak, and Y.~C. Eldar, ``Carrier frequency and bandwidth
  estimation of cyclostationary multiband signals,'' in {\em 2016 IEEE
  International Conference on Acoustics, Speech and Signal Processing
  (ICASSP)}, pp.~3716--3720, 2016.

\bibitem{zhu2015wall}
Z.~Zhu and M.~B. Wakin, ``Wall clutter mitigation and target detection using
  discrete prolate spheroidal sequences,'' in {\em 2015 3rd International
  Workshop on Compressed Sensing Theory and its Applications to Radar, Sonar
  and Remote Sensing (CoSeRa)}, pp.~41--45, 2015.

\bibitem{mishali2009blind}
M.~Mishali and Y.~C. Eldar, ``Blind multiband signal reconstruction: Compressed
  sensing for analog signals,'' {\em IEEE Transactions on Signal Processing},
  vol.~57, no.~3, pp.~993--1009, 2009.

\bibitem{slepian1961prolate}
D.~Slepian and H.~O. Pollak, ``Prolate spheroidal wave functions, fourier
  analysis and uncertainty — i,'' {\em Bell System Technical Journal},
  vol.~40, no.~1, pp.~43--63, 1961.

\bibitem{landau1961prolate}
H.~J. Landau and H.~O. Pollak, ``Prolate spheroidal wave functions, fourier
  analysis and uncertainty — ii,'' {\em Bell System Technical Journal},
  vol.~40, no.~1, pp.~65--84, 1961.

\bibitem{landau1962prolate}
H.~J. Landau and H.~O. Pollak, ``Prolate spheroidal wave functions, fourier
  analysis and uncertainty—iii: The dimension of the space of essentially
  time- and band-limited signals,'' {\em Bell System Technical Journal},
  vol.~41, no.~4, pp.~1295--1336, 1962.

\bibitem{slepian1978prolate}
D.~Slepian, ``Prolate spheroidal wave functions, fourier analysis, and
  uncertainty—v: The discrete case,'' {\em Bell System Technical Journal},
  vol.~57, no.~5, pp.~1371--1430, 1978.

\bibitem{zemen2005time}
T.~Zemen and C.~Mecklenbrauker, ``Time-variant channel estimation using
  discrete prolate spheroidal sequences,'' {\em IEEE Transactions on Signal
  Processing}, vol.~53, no.~9, pp.~3597--3607, 2005.

\bibitem{davenport2012compressive}
M.~A. Davenport and M.~B. Wakin, ``Compressive sensing of analog signals using
  discrete prolate spheroidal sequences,'' {\em Applied and Computational
  Harmonic Analysis}, vol.~33, no.~3, pp.~438--472, 2012.

\bibitem{zhu2019eigenvalue}
Z.~Zhu, S.~Karnik, M.~A. Davenport, J.~Romberg, and M.~B. Wakin, ``The
  eigenvalue distribution of discrete periodic time-frequency limiting
  operators,'' {\em IEEE Signal Processing Letters}, vol.~25, no.~1,
  pp.~95--99, 2018.

\bibitem{jain1981extra}
A.~Jain and S.~Ranganath, ``Extrapolation algorithms for discrete signals with
  application in spectral estimation,'' {\em IEEE Transactions on Acoustics,
  Speech, and Signal Processing}, vol.~29, no.~4, pp.~830--845, 1981.

\bibitem{xu2023oblivious}
X.~Xu and Y.~Gu, ``Oblivious near-optimal sampling for multidimensional signals
  with fourier constraints,'' in {\em Proceedings of The 26th International
  Conference on Artificial Intelligence and Statistics} (F.~Ruiz, J.~Dy, and
  J.-W. van~de Meent, eds.), vol.~206 of {\em Proceedings of Machine Learning
  Research}, pp.~4532--4555, PMLR, 25--27 Apr 2023.

\bibitem{slepian1964prolate}
D.~Slepian, ``Prolate spheroidal wave functions, fourier analysis and
  uncertainty — iv: Extensions to many dimensions; generalized prolate
  spheroidal functions,'' {\em Bell System Technical Journal}, vol.~43, no.~6,
  pp.~3009--3057, 1964.

\bibitem{ramani2006non}
S.~Ramani, D.~Van De~Ville, and M.~Unser, ``Non-ideal sampling and adapted
  reconstruction using the stochastic matern model,'' in {\em 2006 IEEE
  International Conference on Acoustics Speech and Signal Processing
  Proceedings}, vol.~2, pp.~II--II, IEEE, 2006.

\bibitem{avron2019leverage}
H.~Avron, M.~Kapralov, C.~Musco, C.~Musco, A.~Velingker, and A.~Zandieh, ``A
  universal sampling method for reconstructing signals with simple fourier
  transforms,'' in {\em STOC 2019 - Proceedings of the 51st Annual ACM SIGACT
  Symposium on Theory of Computing} (M.~Charikar and E.~Cohen, eds.),
  Proceedings of the Annual ACM Symposium on Theory of Computing,
  pp.~1051--1063, Association for Computing Machinery, June 2019.

\bibitem{jonathan2018subspace}
J.~Bosse and O.~Rabaste, ``Subspace rejection for matching pursuit in the
  presence of unresolved targets,'' {\em IEEE Transactions on Signal
  Processing}, vol.~66, no.~8, pp.~1997--2010, 2018.

\bibitem{li2022bistatic}
Z.~Li, H.~Ye, Z.~Liu, Z.~Sun, H.~An, J.~Wu, and J.~Yang, ``Bistatic sar
  clutter-ridge matched stap method for nonstationary clutter suppression,''
  {\em IEEE Transactions on Geoscience and Remote Sensing}, vol.~60, pp.~1--14,
  2022.

\bibitem{horn2012matrix}
R.~A. Horn and C.~R. Johnson, {\em Matrix analysis}.
\newblock Cambridge university press, 2012.

\end{thebibliography}
\end{document}